\documentclass[a4paper,cleveref, autoref,thm-restate]{lipics-v2021}

\bibliographystyle{plainurl}

\usepackage{amsmath,amssymb,color,graphicx,tabularx,wrapfig,xspace,yhmath,enumerate}
\usepackage{xcolor,soul}
\usepackage{hyperref}
\usepackage{caption}
\usepackage{subcaption}
\usepackage[noend]{algpseudocode}
\usepackage{algorithm}
\usepackage{thm-restate}


\title{A Coreset for Approximate Furthest-Neighbor Queries in a Simple Polygon}
\author{Mark de Berg}{Department of Mathematics and Computer Science, TU Eindhoven, the Netherlands}{M.T.d.Berg@tue.nl}{https://orcid.org/0000-0001-5770-3784}{}
\author{Leonidas Theocharous}{Department of Mathematics and Computer Science, TU Eindhoven, the Netherlands}{l.theocharous@tue.nl}{}{}
\authorrunning{M.~de Berg and L.~Theocharous}
\Copyright{Mark de Berg and Leonidas Theocharous}
\ccsdesc[100]{Theory of computation~Design and analysis of algorithms}
\keywords{Further-neighbor queries, polygons, geodesic distance, coreset}

\makeatletter
\newcommand{\multiline}[1]{%
  \begin{tabularx}{\dimexpr\linewidth-\ALG@thistlm}[t]{@{}X@{}}
    #1
  \end{tabularx}
}
\makeatother

\newif\ifComments
\Commentstrue
\ifComments
    \newcommand{\mdb}[1]{\textcolor{blue}{ MdB: #1}}
    \newcommand{\lt}[1]{\textcolor{red}{ LT: #1}}
\else
    \newcommand{\mdb}[1]{}
    \newcommand{\lt}[1]{}
\fi

\newcommand{\BeginMyItemize}{\begin{itemize}\setlength{\itemsep}{-\parskip}}
\newcommand{\EndMyItemize}{\end{itemize}}

\newcommand{\BeginMyEnumerate}{\begin{enumerate}\setlength{\itemsep}{-\parskip}}
\newcommand{\EndMyEnumerate}{\end{enumerate}}


\renewcommand{\leq}{\leqslant}
\renewcommand{\geq}{\geqslant}

\newtheorem{defin}{Definition}
\newtheorem{lem}[defin]{Lemma}
\newtheorem{myfact}[defin]{Fact}

\newenvironment{myquote}%
  {\list{}{\leftmargin=4mm\rightmargin=4mm}\item[]}%
  {\endlist}
\newcommand{\claiminproof}[2]{\begin{myquote}\noindent\emph{Claim~#1.} #2 \end{myquote}}
\newenvironment{proofinproof}{\begin{myquote}\noindent\emph{Proof.}}{\hfill $\lhd$ \end{myquote}}

\newcommand{\Reals}{\mathbb{R}}

\newcommand{\bd}{\partial}


\newcommand{\diam}{\mathrm{diam}}

\newcommand{\tree}{\ensuremath{\mathcal{T}}}



\newcommand{\T}{\ensuremath{\mathcal{T}}}
\newcommand{\calP}{\ensuremath{\mathcal{P}}}
\renewcommand{\P}{\calP}


\newcommand{\ceil}[1]{\left\lceil #1 \right\rceil}

\newcommand{\eps}{\varepsilon}

\newcommand{\etal}{\emph{et al.}\xspace}



\newcommand{\coreset}{C}

\newcommand{\Span}{\mathrm{span}}
\newcommand{\ext}{\mathrm{ext}}
\newcommand{\rch}{\mbox{\sc rch}}
\newcommand{\fpvd}{\mbox{\sc fpvd}}
\newcommand{\fn}{\mbox{\sc fn}}
\newcommand{\dir}{\mathrm{dir}}
\newcommand{\first}{\mathrm{first}}
\newcommand{\last}{\mathrm{last}}
\newcommand{\Hone}{\P_1(\Gamma^*)}
\newcommand{\Htwo}{\P_2(\Gamma^*)}
\newcommand{\fext}{\overrightarrow{\ext}}
\newcommand{\bext}{\overleftarrow{\ext}}
\newcommand{\bfext}{\overleftrightarrow{\ext}}


\category{} 

\relatedversion{} 

\funding{MdB and LT are supported by the  Dutch Research Council (NWO) through Gravitation-grant NETWORKS-024.002.003.}


\nolinenumbers 

\hideLIPIcs  

\EventEditors{John Q. Open and Joan R. Access}
\EventNoEds{2}
\EventLongTitle{42nd Conference on Very Important Topics (CVIT 2016)}
\EventShortTitle{CVIT 2016}
\EventAcronym{CVIT}
\EventYear{2016}
\EventDate{December 24--27, 2016}
\EventLocation{Little Whinging, United Kingdom}
\EventLogo{}
\SeriesVolume{42}
\ArticleNo{23}

\begin{document}
\thispagestyle{empty}
\maketitle

\begin{abstract}
Let $\P$ be a simple polygon with $m$ vertices and let $P$ be a set of $n$ points inside~$\P$.
We prove that there exists, for any $\eps>0$, a set $\coreset \subset P$ of size $O(1/\eps^2)$ such that
the following holds: for any query point $q$ inside the polygon~$\P$, the geodesic distance from $q$ to its furthest neighbor in $\coreset$
is at least $1-\eps$ times the geodesic distance to its further neighbor in~$P$. Thus the set~$\coreset$
can be used for answering $\eps$-approximate furthest-neighbor queries with a data structure
whose storage requirement is independent of the size of~$P$. The coreset can be constructed in 
$O\left(\frac{1}{\eps} \left( n\log(1/\eps) + (n+m)\log(n+m)\right) \right)$ time.
\end{abstract}

\section{Introduction}

\subparagraph{Background and problem statement.}
In nearest-neighbor searching the goal is to preprocess a set $P$ of $n$ points in a metric space
such that, for a query point~$q$, one can quickly retrieve its nearest neighbor in~$P$.
Exact and approximate data structures for nearest-neighbor queries have been studied 
in many settings, including general metric spaces, 2- and higher-dimensional 
Euclidean spaces, and inside (simple) polygons.  

\emph{Furthest-neighbor queries}, while not as popular as nearest-neighbor queries,
have been studied extensively as well. Here the goal is to report a point in $P$ that 
\emph{maximizes} its distance to the query point~$q$. We will denote 
an (arbitrarily chosen, if it is not unique) furthest neighbor of $q$ in $P$ by $\fn(q,P)$.
Furthest-neighbor queries can be answered exactly by performing point location in
$\fpvd(P)$, the furthest-point Voronoi diagram of~$P$. 
In $\Reals^2$, $\fpvd(P)$ has complexity $O(n)$ and it can be computed
in $O(n \log n)$ time~\cite{DBLP:books/lib/BergCKO08}, and even in $O(n)$ time 
if the convex hull is given~\cite{DBLP:journals/dcg/AronovFW93}. 
Furthermore, point-location queries can be answered in $O(\log n)$ time with a 
linear-space data structure~\cite{DBLP:books/lib/BergCKO08},
giving an optimal solution for exact furthest-neighbor queries.
Furthest-site Voronoi diagrams in the plane have also been studied for non-point sites, 
such as line segments~\cite{DBLP:journals/ipl/AurenhammerDK06} and polygons~\cite{DBLP:journals/comgeo/CheongEGGHLLN11},
and even in an abstract setting~\cite{DBLP:journals/ijcga/MehlhornMR01}.

In $\Reals^d$ the worst-case complexity of the furthest-point Voronoi diagram 
is~$O(n^{\ceil{d/2}})$~\cite{DBLP:conf/compgeom/Seidel87}.
(This also holds for polyhedral distance functions~\cite{DBLP:conf/isaac/AurenhammerPS21}.)
Hence, answering furthest-neighbor queries using $\fpvd(P)$
leads to a data structure whose storage is exponential in the dimension~$d$.
Bespamyatnikh~\cite{DBLP:conf/cccg/Bespamyatnikh96a} therefore  studied  
\emph{$\eps$-approximate furthest-neighbor queries}, where the goal is to
report a point in $P$ whose distance to the query point~$q$ is at least $1-\eps$ 
times the distance from~$q$ to its actual furthest neighbor.
He presented a data structure with $O(dn)$ storage that answers $\eps$-approximate 
furthest-neighbor queries in $O(1/\eps^{d-1})$ time.
Note that the storage is linear in~$d$ (and $n$), and that the query time is independent of~$n$.
The query time is exponential in $d$, however, so in very high dimensions
this will not be efficient. Pagh~\etal~\cite{DBLP:journals/is/PaghSSS17}, building on earlier work
of Indyk~\cite{DBLP:conf/soda/Indyk03}, presented a data structure for which both the storage and
the query time are linear in~$d$: for a given~$c>1$ they
can build a data structure that uses $\tilde{O}(dn^{2/c^2})$ storage and 
can report a $c$-approximate\footnote{That is, a point $p\in P$ whose
distance to the query point~$q$ is at least $1/c$ times the distance from $q$ to its actual furthest neighbor.
Note the slight inconsistency in notation when compared to the definition of $\eps$-approximate nearest neighbor.}
furthest neighbor with constant probability in time~$\tilde{O}(dn^{2/c^2})$.
Finally, Goel, Indyk, and Varadarajan~\cite{DBLP:conf/soda/GoelIV01} observed that 
a set $\coreset\subset P$ of at most $d+1$ points defining the smallest enclosing ball of $P$
has the property that, for any query point~$q\in \Reals^d$, 
the furthest neighbor of~$q$
in $\coreset$ is a $\sqrt{2}$-approximation to the actual furthest neighbor.

\subparagraph{Our contribution.}
In this paper we study furthest-neighbor queries on a set $P$ of $n$ points
inside a simple polygon~$\P$ with $m$ vertices, under the geodesic distance.
The \emph{geodesic distance} between two points $x,y \in \P$ 
is the Euclidean length of the shortest path $\pi(x,y)$ connecting them in $\P$.
(Shortest-path problems inside a simple polygon have received considerable attention,
and it is beyond the scope of our paper discuss all the work that has been done.
The interested reader may consult the survey by Mitchell~\cite{DBLP:books/el/00/Mitchell00}
for an introduction into the area.)
It is known that $\fpvd_{\P}(P)$, the geodesic furthest-point Voronoi diagram of $P$ 
inside $\P$, has complexity $O(n+m)$. 
Wang \cite{Wang2023} recently showed that  $\fpvd_{\P}(P)$ can be constructed in optimal $O(m+ n\log n)$ time,
improving an earlier result of Oh, Barba, and Ahn~\cite{DBLP:journals/algorithmica/OhBA20}.

Since the complexity of $\fpvd_{\P}(P)$ is $O(n+m)$
and point-location queries in the plane can be answered in logarithmic time
with a linear-size data structure, the
approach of answering furthest-neighbor queries using $\fpvd_{\P}(P)$ is
optimal with respect to storage and query time. This only holds
for \emph{exact} queries, however: in this paper we show that
$\eps$-approximate furthest-neighbor queries can be answered
using an amount of storage that is \emph{independent of the size of~$P$}. More precisely, 
 we show that there exists a set $\coreset \subset P$ of $O(1/\eps^2)$ points
such that for any query point $q\in \P$, the geodesic furthest neighbor of $q$
among the points in~$\coreset$ is a $(1-\eps)$-approximation of its actual furthest neighbor.
Note that the size of the coreset is not only independent of the size of~$P$,
but also of the complexity of the polygon~$\P$.
We call a set~$\coreset$ with this property an \emph{$\eps$-coreset for furthest-neighbor
queries}. Using such an $\eps$-coreset we can answer $\eps$-approximate 
furthest neighbor queries with a data structure---a point-location structure on 
the geodesic furthest-point Voronoi diagram of $\coreset$---whose storage
is~$O(m+1/\eps^2)$. Note that a similar result is not possible for 
\emph{nearest}-neighbor queries, since for any point $p\in P$
there is a query point~$q$ such that $p$ is the only valid answer to an $\eps$-approximate
nearest-neighbor query (even when $\eps$ is very large).

\subparagraph{Related work.}
To the best of our knowledge, $\eps$-coresets for furthest-neighbor queries have
not been studied explicitly so far, although the observation by 
Goel, Indyk, and Varadarajan~\cite{DBLP:conf/soda/GoelIV01} mentioned above 
can be restated in terms of coresets: any set $P$ in $\Reals^d$ has a $(1/\sqrt{2})$-coreset
of size~$d+1$. It is easy to see that any set $P$ in a space of 
constant doubling dimension~$d$ has a coreset of size $O(1/\eps^d)$. 
Indeed, if $r$ denotes the radius of the smallest enclosing ball of~$P$,
then the distance from any query point $q$ to it furthest neighbor in $P$
is at least~$r$. Hence, covering $B$ with $O(1/\eps^d)$ balls of radius~$\eps \cdot r$
and picking an arbitrary point from $P$ in each non-empty ball, gives an $\eps$-coreset 
for furthest-neighbor queries. 

In $\Reals^d$ this can be improved using $\eps$-kernels~\cite{DBLP:journals/jacm/AgarwalHV04}.
An \emph{$\eps$-kernel} of a point set $P$ in $\Reals^d$ is a subset $\coreset \subset P$
that approximates the width of $P$ in any direction. More precisely, if $\mathrm{width}(\phi,P)$
denotes the width of $P$ in direction $\phi$, then we require that 
$\mathrm{width}(\phi,\coreset) \geq (1-\eps)\cdot \mathrm{width}(\phi,P)$
for any direction~$\phi$. It is easily checked that an $(\eps/2)$-kernel
for $P$ is, in fact, an $\eps$-coreset for furthest-neighbor queries.
Since there always exists an $\eps$-kernel of size~$O((1/\eps)^{(d-1)/2})$~\cite{DBLP:journals/jacm/AgarwalHV04}, 
we immediately obtain an $\eps$-coreset for furthest-neighbor queries in $\Reals^d$ of 
the same size. 

\subparagraph{Overview of our technique.}
While obtaining a small coreset for furthest-neighbor queries is easy under the
Euclidean distance, it becomes much more challenging when $P$ is
a point set in a simple polygon~$\P$ and we consider the geodesic distances. 
It is known (and easy to see) that $\fn(q,P)$ is always a vertex of $\rch(P)$,
the \emph{relative convex hull}\footnote{The relative convex hull~\cite{toussaint-rch} of a set $P$ of
points inside a polygon~$\P$ is the intersection of all relatively convex sets (that is,
all sets $X$ such that for any two points $x,y\in X$ we have $\pi(x,y)\subset X$) containing~$P$;
see Figure~\ref{fig:definitions}(i). Obviously the relative convex hull not only
depends on $P$ but also on the polygon~$\P$, but to avoid cluttering the notation we
simply write $\rch(P)$ instead of $\rch_{\P}(P)$.} of $P$ inside $\P$. Thus, one may be tempted
to cover the boundary of $\rch(P)$ by geodesic balls of radius~$(\eps/2)\cdot \diam(P)$,
where $\diam(P)$ denotes the diameter of~$P$ under the geodesic distance,
and pick a point from~$P$ in any non-empty ball. This will give an $\eps$-coreset
since the geodesic distance from $q$ to $\fn(q,P)$ is at least~$\diam(P)/2$.
Unfortunately, the geodesic distance does not have bounded doubling
dimension, and the boundary of $\rch(P)$ can be arbitrarily much
longer than $\diam(P)$, the geodesic diameter of $P$ inside $\P$.
To obtain a coreset size that depends only on $\eps$, we must therefore use a completely different approach.
\medskip

The idea behind our coreset construction is as follows. 
Consider two points $p_1,p_2\in P$ that define~$\diam(P)$ 
and let $\Gamma := \pi(p_1,p_2)$ be the shortest path between them. 
We will prove that the following property holds for any
query point~$q$ inside $\rch(P)$: either there is a point~$q^*\in \Gamma$ with $\fn(q^*,P)=\fn(q,P)$,
or $\pi(q,\fn(q,P))$ crosses~$\Gamma$. 

Query points of the former type are easy to handle:
we partition~$\Gamma$ into $O(1/\eps)$ pieces $\Gamma_i$ of length $\eps\cdot \diam(P)$,
and we add one furthest neighbor for each piece $\Gamma_i$ into our coreset.

Query points of the latter type are harder to handle. Let $\overline{q}$
denote the point where $\pi(q,\fn(q,P))$ crosses~$\Gamma$. The difficulty
lies in the fact that $\fn(\overline{q},P)$ need not be the same as $\fn(q,P)$
and, in fact, the geodesic distance from $q$ to $\fn(\overline{q},P)$ need not even 
approximate the geodesic distance from $q$ to~$\fn(q,P)$ well enough. The key insight
is that we \emph{can} get a good approximation if, when computing the furthest
neighbor of $\overline{q}$, we restrict our attention to points $p\in P$ such that
the angle between the first link of $\pi(\overline{q},p)$ and
the first link of $\pi(\overline{q},\fn(q,P))$ is at most~$\eps$. Thus we will
define $O(1/\eps)$ ``cones of directions'' and, for each
piece~$\Gamma_i$ we will add one ``directed furthest neighbor'' for each such cone.
The total size of the coreset created in this manner will be $O(1/\eps^2)$.

The above does not immediately work for query points $q$ that lie outside~$\rch(P)$,
because for such query points the path $\pi(q,\fn(q,P))$ may not cross~$\Gamma$.
To overcome this, we show that there exists a set $B$ of $O(1)$ segments, each of length
at most $2\|\Gamma\|$, such that the following property holds for any query point $q\notin \rch(P)$: either $\pi(q,\fn(q,P))$ crosses $\Gamma$ or $\pi(q,\fn(q,P))$ crosses a segment in $B$. We can then split $B$ in $O(1/\eps)$ pieces and then use the ``cone-approach'' for each piece. 

\section{A coreset of furthest-neighbor queries in a simple polygon}
In this section we will prove our main theorem, which is as follows.
\begin{theorem}\label{thm:coreset}
For any $0<\varepsilon \leq 1$, there exists an $\eps$-coreset $\coreset\subset P$ of size $O(1/\varepsilon^2)$
for the furthest-neighbor problem, that is, a set $\coreset$
such that for any query point $q \in \P$ we have that $\|\pi(q,\fn(q,\coreset))\|\geq (1-\varepsilon) \|\pi(q,\fn(q,P))\|$.
The coreset $\coreset$ can be constructed in $O\left(\frac{1}{\eps} \left( n\log(1/\eps) + (n+m)\log(n+m)\right) \right)$ time.
\end{theorem}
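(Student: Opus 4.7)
\medskip
\noindent\textbf{Proof plan.}
The plan follows the strategy sketched in the overview. First I would compute two points $p_1,p_2 \in P$ realizing the geodesic diameter and set $\Gamma := \pi(p_1,p_2)$. My first technical lemma is a dichotomy: for every $q \in \rch(P)$, either (a)~some $q^* \in \Gamma$ satisfies $\fn(q^*,P) = \fn(q,P)$, or (b)~the path $\pi(q,\fn(q,P))$ crosses $\Gamma$. The natural route is to inspect the cell $V(p^*)$ of $p^* := \fn(q,P)$ in $\fpvd_{\P}(P)$: if $V(p^*)\cap\Gamma \neq \emptyset$, pick $q^*$ in the intersection and (a) holds; otherwise $\Gamma$ separates $V(p^*)$ from the component of $\rch(P)\setminus\Gamma$ containing $q$, forcing the crossing in~(b). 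I would also record the uniform lower bound $\|\pi(q,\fn(q,P))\| \geq \tfrac{1}{2}\diam(P)$, which holds for every $q\in\P$ because $\|\pi(q,p_1)\|+\|\pi(q,p_2)\| \geq \diam(P)$; this is what converts the additive errors below into multiplicative ones.

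\medskip
\noindent\textbf{Construction and correctness.}
To handle case~(a), I partition $\Gamma$ into $k_1 := \lceil 1/\eps\rceil$ sub-arcs $\Gamma_1,\dots,\Gamma_{k_1}$ of arc-length at most $\eps\,\diam(P)$, pick a representative $v_i \in \Gamma_i$, and add $\fn(v_i,P)$ to $\coreset$. The triangle inequality on the geodesic metric gives $\|\pi(v_i,\fn(v_i,P))\| \geq \|\pi(q^*,\fn(q^*,P))\| - \eps\,\diam(P)$ for any $q^*\in\Gamma_i$, and $\eps\,\diam(P) \leq 2\eps\,\|\pi(q,\fn(q,P))\|$ by the diameter bound. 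To handle case~(b), I equip each $v_i$ with $k_2 = O(1/\eps)$ angular cones of width $\eps$ and add, for every non-empty pair $(i,j)$, the point $p\in P$ maximizing $\|\pi(v_i,p)\|$ among those whose first link of $\pi(v_i,p)$ lies in cone $C_{i,j}$. For a query in case~(b) I take $\overline{q}$ to be the crossing of $\pi(q,\fn(q,P))$ with $\Gamma$, $i$ the index of the piece containing $\overline{q}$, and $j$ the cone at $v_i$ containing the first-link direction from $v_i$ toward $p^*$; the positional shift from $\overline{q}$ to $v_i$ is at most $\eps\,\diam(P)$, and combined with the angular width $\eps$ it yields an additive error $O(\eps\,\diam(P))$ in $\|\pi(q,p)\|$ relative to $\|\pi(q,p^*)\|$. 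The total coreset size is $O(1/\eps^2)$.

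\medskip
\noindent\textbf{Outside $\rch(P)$ and running time.}
The main obstacle I foresee is handling queries with $q\notin\rch(P)$, for which the dichotomy fails. I would construct the set $B$ of $O(1)$ segments of length $\leq 2\|\Gamma\|$ promised in the overview: the natural candidates are clipped extensions of the first and last links of $\Gamma$, together with suitable tangent segments at $p_1$ and $p_2$, chosen so that every shortest path from an outside query to its furthest neighbor crosses either $\Gamma$ or a segment of $B$. Establishing this covering property is the delicate combinatorial step; once in hand, I repeat the piece-and-cone construction on each segment of $B$, keeping the coreset size at $O(1/\eps^2)$. For the running time, $\diam(P)$ and $\Gamma$ are computed in $O((n+m)\log(n+m))$ time via Wang's algorithm; for each of the $O(1/\eps)$ representatives, the shortest-path tree from $v_i$ is built in $O((n+m)\log(n+m))$, after which bucketing the $n$ first-link directions into the $O(1/\eps)$ cones takes $O(n\log(1/\eps))$. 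Summing over all representatives yields the stated bound $O\!\left(\tfrac{1}{\eps}\bigl(n\log(1/\eps)+(n+m)\log(n+m)\bigr)\right)$.
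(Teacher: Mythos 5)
Your plan follows the same high-level route as the paper (diameter path $\Gamma$, dichotomy lemma, $O(1/\eps)$ pieces times $O(1/\eps)$ cones, a constant-size segment set $B$ for queries outside $\rch(P)$), but the two places where the actual work happens are not carried out, and the steps you substitute would not go through. First, the dichotomy lemma. Your route via the cell $V(p^*)$ of $\fpvd_{\P}(P)$ breaks down: since $p^*=\fn(q,P)$, the query $q$ itself lies in $V(p^*)$, so the claim that ``$\Gamma$ separates $V(p^*)$ from the component of $\rch(P)\setminus\Gamma$ containing $q$'' cannot be right as stated; what you would actually need is that $p^*$ and $q$ lie on opposite sides of $\Gamma$, and $V(p^*)\cap\Gamma=\emptyset$ gives no information about where $p^*$ lies. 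Note also that the dichotomy is simply false for an arbitrary geodesic between two points of $P$ -- it relies on $\|\Gamma\|=\diam(P)$ -- yet your argument never invokes the diameter property; the paper's proof is a four-region case analysis in which the diameter is used to derive contradictions. Second, even granting the dichotomy as you state it, your case~(a) argument does not close: you bound $\|\pi(v_i,\fn(v_i,P))\|$ against $\|\pi(q^*,\fn(q^*,P))\|$, but the quantity that must be bounded is $\|\pi(q,\cdot)\|$, and $q$ may be far from $\Gamma$; a point that is far from $q^*$ need not be far from $q$. What makes the transfer possible is the stronger statement the paper proves, namely that $q^*$ can be taken on the backwards extension $\bext(\pi(q,\fn(q)))$, so that $\|\pi(q^*,\fn(q))\|=\|\pi(q^*,q)\|+\|\pi(q,\fn(q))\|$; with that additivity your computation works (with adjusted constants), but your FPVD-cell argument would only deliver ``some $q^*\in\Gamma$ with the same furthest neighbor,'' which is too weak.

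The same issue of assertion-in-place-of-proof occurs in case~(b). The sentence ``the positional shift is at most $\eps\,\diam(P)$ and combined with the angular width $\eps$ it yields an additive error $O(\eps\,\diam(P))$'' is exactly the technical heart of the result and is not justified: in a simple polygon, geodesics emanating from nearby points in nearby directions can diverge arbitrarily (a reflex spike of $\bd\P$ between them), and the first-link direction at your single representative $v_i$ need not bear any relation to the direction in which $\pi(q,p^*)$ crosses the piece, because the geodesic from $v_i$ to $p^*$ may first run along $\Gamma_i$ or be deflected by reflex vertices on $\Gamma_i$. This is precisely why the paper defines reachability from a piece (requiring $\pi(x,p)\cap\Gamma_i=\{x\}$), adds extra coreset points for pockets and the intermediate edge of each piece, distinguishes crossings at pocket/intermediate edges from crossings at $\first(\Gamma_i)$ and $\last(\Gamma_i)$, and proves the cone case via the span properties of pseudo-triangles (Claims~1 and~2 inside the proof of Lemma~\ref{lem:third-case}); none of this is replaced by an alternative argument in your sketch. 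Finally, you correctly identify the covering set $B$ for queries outside $\rch(P)$ as the delicate step, but leave it open; for the record, tangent segments at $p_1,p_2$ are not what is needed -- the paper clips the extension $\Gamma^*$ to a square $\sigma$ of side length $2\|\Gamma\|$ containing $\rch(P)$ and adds the connected component of $\bd\sigma\cap\P$ through which the extension exits, and the covering property then follows because that component separates the exit point from $P$.
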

For now on, we will use $\fn(q)$ as a shorthand for $\fn(q,P)$, the actual nearest neighbor of $q$.
We will first prove the existence of an $\eps$-coreset for query points that lie inside $\rch(P)$,
then extend the result to query points outside $\rch(P)$, and then present an efficient
algorithm to compute the coreset. Before we start we need to introduce some notation and terminology,

\subparagraph{Notation and terminology.}
Recall that $\pi(x,y)$ denotes the shortest path between  two points $x,y\in\P$. 
We assume that $\pi(x,y)$ is directed from $x$ to $y$ and we direct the edges of $\pi(x,y)$ accordingly. 
We denote the first edge of $\pi(x,y)$ by $\first(\pi(x,y))$, while 
the last edge of $\pi(x,y)$ is denoted by~$\last(\pi(x,y))$; 
see Figure~\ref{fig:definitions}(i),
where this is illustrated for~$\Gamma$, the path between the points $p_1$ and $p_2$ defining the diameter.
We define the 
\emph{backwards extension} of a path~$\pi$, denoted by  $\bext(\pi)$, to be the path obtained by extending the first edge of $\pi$ in backwards direction until it hits~$\bd \P$. When we want to extend only 
over a given distance~$\ell>0$, then we write $\bext(\pi,\ell)$. Note that if the extension
would hit $\bd\P$ before length~$\ell$ is reached, then $\bext(\pi,\ell) = \bext(\pi)$.
The \emph{forward extensions} $\fext(\pi)$ and $\fext(\pi,\ell)$ are defined similarly, except that we extend the last edge of $\pi$ in the forward direction. Finally, we define $\bfext(\pi) := \bext(\pi)\cup\fext(\pi)$ and $\bfext(\pi,\ell) := \bext(\pi,\ell)\cup\fext(\pi,\ell)$
to be the extensions of $\pi$ in both directions.

As is standard in the literature, we will use the 1-sphere  $\mathbb{S}^1$ to represent the space of all directions in $\Reals^2$. Thus, for a directed line segment $s\subset \P$, its \emph{direction} will be a point on~$\mathbb{S}^1$, which we will denote by $\dir(s)$. For a shortest path $\pi(x,y)\subset \P$ we define its direction to be $\dir(x,y):= \dir(\first(\pi(x,y))$.  For a point $\phi\in \mathbb{S}^1$, we will denote by $\phi^*$ its antipodal point.
For two points $\phi,\phi'\in \mathbb{S}^1$ that are not anipodal, we denote by $\mathbb{S}^1[\phi,\phi']$ the smaller of the two arcs with endpoints $\phi,\phi'$; see Figure~\ref{fig:definitions}(ii) for an example.

Recall that $\diam(P)$ denotes the (geodesic) diameter of the input point set $P$ inside~$\P$.
We fix two points $p_1,p_2\in P$ that define~$\diam(P)$, so that $\Gamma=\pi(p_1,p_2)$. 
With a slight abuse of terminology we will use the term \emph{diameter} both to
refer to $\Gamma$ (the actual path between $p_1,p_2$) and to $\diam(P)$ (the length of the path).  
We define  $\Gamma^* := \bfext(\Gamma)$ to be the extension of $\Gamma$ 
in both directions until $\partial \P$ is hit. Note that $\Gamma^*$ splits
$\P$ into two regions, one on either side. We denote these two regions by
$\Hone$ and $\Htwo$. Note that $\Hone$---and, similarly, $\Htwo$---need not be 
be a simple polygon, because its interior need not be connected.

\begin{figure}[b]
\begin{center}
\includegraphics{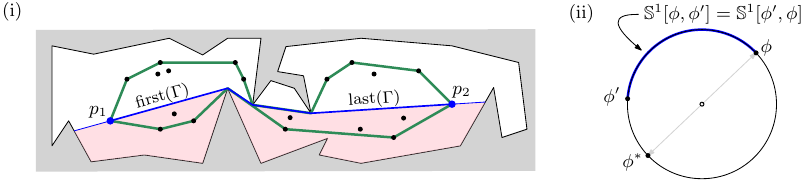}
\end{center}
\caption{(i) The boundary of $\rch(P)$, the relative convex hull,  is shown in green. 
             The points $p_1,p_2$ define $\diam(P)$, so $\Gamma=\pi(p_1,p_2)$. 
             The extension of $\Gamma$ splits~$\P$ into two pieces: $\Hone$, shown in pink,
             and $\Htwo$, shown in white.
             (ii) The 1-sphere of directions~$\mathbb{S}^1$. } \label{fig:definitions}
\end{figure}

\subsection{Reduction to query points on the diameter}
We start with some important properties which concern geodesic triangles. In the following, we denote 
by ${\triangle_\pi}{p'q'r'}$ the geodesic triangle with vertices $p',q',r'\in\P$. Let $p$ denote the 
vertex where $\pi(p',q')$ and $\pi(p',r')$ split. Similarly let $q$ denote the vertex where $\pi(q,r)$
and $\pi(q',p')$ split, and let $r$ denote the vertex at which $\pi(r',p')$ and $\pi(r',q')$ split; see Figure~\ref{fig:intersect}. 
Note that $\triangle_\pi{pqr}$ is also a geodesic triangle. 
In other words, it is a (polygonal) \emph{pseudo-triangle}~\cite{DBLP:journals/ijcga/PocchiolaV96}: a simple polygon with exactly three convex vertices, called the \emph{corners} of the pseudo-triangle.
We will need the following property, illustrated in Figure~\ref{fig:intersect}(i).
\begin{figure}
\begin{center}
\includegraphics{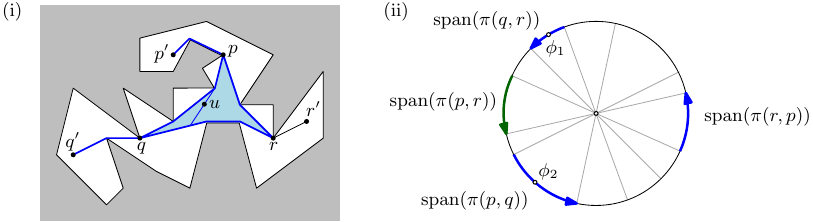}
\end{center}
\caption{(i) A geodesic triangle $\Delta_{\pi}p'q'r'$ and the pseudo-triangle $\Delta_{\pi}pqr$ defined by the points where the paths meet. The extension of $\last(\pi(p,u))$ hits $\pi(q,r)$.
 (ii) Illustration for Lemma~\ref{lem:geo-triangle}. }
\label{fig:intersect}
\end{figure}
\begin{observation} \label{triangle-intersect}
    Let $\triangle_\pi{pqr}$ be a pseudo-triangle. Then for all $u\in \triangle_\pi{pqr}$ such that $u\notin \pi(q,r)$ we have that the forward extension of $\last(\pi(p,u))$ will intersect $\pi(q,r)$. 
\end{observation}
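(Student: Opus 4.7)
The plan is to let $v$ denote the predecessor of $u$ on $\pi(p,u)$, so that $\last(\pi(p,u)) = \ol{vu}$, and then analyze the forward extension $\rho$ of this edge beyond $u$. Let $x$ denote the first point where $\rho$ meets the boundary of $\triangle_\pi{pqr}$; the goal is to show $x \in \pi(q,r)$, which I would do by contradiction: suppose instead that $x$ lies on $\pi(p,q) \cup \pi(r,p)$, and without loss of generality take $x \in \pi(p,q)$ (the other case is symmetric).

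The key structural input is that in a pseudo-triangle, the shortest path from a corner $p$ to any point $u \in \triangle_\pi{pqr}$ is ``one-sided'': it follows a prefix of one of the two sides adjacent to $p$ -- that is, of $\pi(p,q)$ or of $\pi(r,p)$ -- up to some vertex $v$, and then takes a single straight segment $\ol{vu}$ to $u$. In particular, $v \in \pi(p,q) \cup \pi(r,p) \cup \{p\}$ and never lies on $\pi(q,r)$: a taut shortest path from $p$ has no obstacle to wrap around at a reflex vertex on the opposite side, because such a vertex's concavity (as a reflex vertex of the pseudo-triangle) opens toward the same side of $\pi(q,r)$ on which $p$ lies. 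I would assume without loss of generality that $v \in \pi(p,q) \cup \{p\}$.

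By the definition of $x$, the segment $\ol{vx}$ -- which contains $u$ by collinearity of $v$, $u$, and $x$ -- lies inside $\triangle_\pi{pqr}$ and hence in $\P$. Thus $\ol{vx}$ is a path in $\P$ from $v$ to $x$, competing with the sub-path of $\pi(p,q)$ from $v$ to $x$, which as a sub-path of the shortest path $\pi(p,q)$ is itself the shortest path from $v$ to $x$ in $\P$. We therefore have $\|\ol{vx}\| \geq \|\pi(v,x)\|$, with equality possible only if $\pi(p,q)$ is actually straight between $v$ and $x$, which would force $u$ to lie on $\pi(p,q)$. This degenerate case $u \in \pi(p,q)$ is then handled by tracing $\rho$ further along $\pi(p,q)$: either $\rho$ reaches $q$, which is on $\pi(q,r)$, or it hits the next bending vertex of $\pi(p,q)$, diverges into the interior of the pseudo-triangle, and the same analysis is re-applied from that new point. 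The case $x \in \pi(r,p)$ is ruled out by the symmetric argument on $\pi(r,p)$, and we obtain the desired contradiction, so $x \in \pi(q,r)$.

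The main obstacle is establishing the one-sided structural property of shortest paths from a corner of a pseudo-triangle; this is where the hypothesis that $\triangle_\pi{pqr}$ is a pseudo-triangle, rather than an arbitrary subregion of $\P$, is genuinely used. Once that property is in hand, the rest reduces to a routine taut-path comparison between the straight segment $\ol{vx}$ and the sub-arc of $\pi(p,q)$ (respectively $\pi(r,p)$) between $v$ and $x$.
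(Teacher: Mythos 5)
The load-bearing step in your argument is the ``one-sided'' structural claim: that $\pi(p,u)$ consists of a prefix of one of the two sides adjacent to $p$ followed by a single straight segment, and in particular never bends at a reflex vertex in the relative interior of $\pi(q,r)$. You justify this only by saying that such a vertex's concavity opens toward the side of $\pi(q,r)$ containing $p$, so the path ``has no obstacle to wrap around''. That reason is not sufficient: it would apply verbatim to the geodesic between \emph{any} two points of $\triangle_\pi{pqr}$ (all of which lie on $p$'s side of $\pi(q,r)$), yet such geodesics do bend at interior vertices of $\pi(q,r)$ --- $\pi(q,r)$ itself is the geodesic between $q$ and $r$, and geodesics between points near $q$ and near $r$ hug it. A taut path can wrap a reflex vertex whose exterior wedge opens away from it whenever it reaches that vertex almost tangentially to the chain; what has to be shown is that a geodesic \emph{emanating from the corner $p$} can never arrive at a $\pi(q,r)$-vertex (or a vertex of the other adjacent side, or an adjacent side traversed backwards) in such a tangential fashion. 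That is true, but proving it is the real content here --- it needs, e.g., the disjoint and antipode-free direction spans of Lemma~\ref{lem:geo-triangle}(i) combined with the convexity of the corner angles, or a tautness/uniqueness argument --- and your write-up asserts it rather than proves it. A second, smaller gap: the case $x \in \pi(r,p)$ is not symmetric to $x\in\pi(p,q)$ when $v$ is an interior vertex of $\pi(p,q)$, because then $v$ and $x$ lie on different sides and there is no ``sub-path of the side from $v$ to $x$'' to compare against; the fact that $\overline{vx}$ is the geodesic between $v$ and $x$ gives no contradiction by itself.

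Both gaps can be closed at once by an argument that makes the structural claim unnecessary. The concatenation of $\pi(p,u)$ with the forward extension up to any point $x$ on it is a path in $\P$ all of whose internal vertices are taut bends (there is no bend at $u$), so by the characterization cited in the paper (Observation~\ref{obs:walk-back}'s proof) it is the unique shortest path from $p$ to $x$. If $x$ lay on $\pi(p,q)$, this path would have to coincide with the sub-path of $\pi(p,q)$ from $p$ to $x$, forcing $u$ onto $\pi(p,q)$ and the extension to stay collinear with that side, which fails at the next bend of the side; if $x$ lay on $\pi(r,p)$, the two paths would already disagree in their first edge at $p$, since the sides split at $p$ with a positive angle. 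Hence the extension, which ends on $\partial\P$ and $\partial\P$ meets the closed pseudo-triangle only in its three sides, can first meet $\partial\triangle_\pi{pqr}$ only along $\pi(q,r)$. The paper gives no proof of this observation, but a short uniqueness argument of this kind is presumably what is intended; your route, by contrast, front-loads a genuinely nontrivial lemma about shortest paths from a corner of a pseudo-triangle and then leaves it unproved.
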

Now consider a path $\pi$. 
Let $e_1,\ldots,e_k$ be the ordered sequence edges of $\pi$, and let $\phi_1,\ldots, \phi_k$
be their directions. We define the \emph{span} of $\pi$ as
$\Span(\pi) := \bigcup_{i=1}^{k-1} \mathbb{S}^1[\phi_i,\phi_{i+1}]$.
In other words, $\Span(\pi)$ contains the directions into which we move
while traversing~$\pi$, where the ``directions over which we turn'' at each vertex are included. By definition, $Span(\pi)$ is connected,  since any two of its consecutive arcs on $\mathbb{S}^1$ share an endpoint.
%
The following lemma provides important properties about the sets of directions 
that occur in a pseudo-triangle.  It is probably known, but since we could not find a reference
we provide a proof in the appendix.
\begin{restatable}{lemma}{geotriangle} \label{lem:geo-triangle}
Let $\triangle_\pi{pqr}$ be a pseudo-triangle. Then 
\begin{enumerate}[(i)] 
   \item  the sets $\Span(\pi(p,q)), \Span(\pi(q,r)), \Span(\pi(r,p))$ are pairwise disjoint and their union contains no antipodal directions.
   \item for any $\phi_1\in \Span(\pi(p,q))$ and $\phi_2\in\Span(\pi(q,r))$, we have  $\Span(\pi(p,r))\subset\mathbb{S}^1[\phi_1,\phi_2]$.
\end{enumerate}
\end{restatable}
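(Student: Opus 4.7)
My plan is to exploit the fact that each side of the pseudo-triangle $\triangle_\pi pqr$ is a convex chain as seen from outside, so the tangent direction evolves rigidly along the boundary. After fixing an orientation so that traversing $p \to q \to r \to p$ is counter-clockwise (interior on the left), every reflex vertex interior to a side causes a clockwise tangent turn of some angle in $(0,\pi)$, while at each corner the tangent jumps counter-clockwise by the exterior angle $\gamma_i = \pi - \alpha_i \in (0,\pi)$. Consequently $\Span(\pi(p,q))$, $\Span(\pi(q,r))$ and $\Span(\pi(r,p))$ are each single clockwise arcs on $\mathbb{S}^1$ of angular length $\tau_{pq}, \tau_{qr}, \tau_{rp}$ (the total clockwise turning along the corresponding side). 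The turning-number identity for a simple polygon gives
\[
(\gamma_p + \gamma_q + \gamma_r) - (\tau_{pq} + \tau_{qr} + \tau_{rp}) = 2\pi,
\]
and combining this with $\gamma_i < \pi$ yields the two key inequalities $\tau_{pq} + \tau_{qr} + \tau_{rp} < \pi$ and $\gamma_i > \tau_{pq} + \tau_{qr} + \tau_{rp}$ for every corner~$i$.

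To verify (i) I would lift the tangent direction to an unwrapped axis in $\Reals$, starting at $\tilde\phi = 0$ just past~$p$. The lift drops by $\tau_{pq}$ along $\pi(p,q)$, jumps up by $\gamma_q$ at $q$, drops by $\tau_{qr}$, and so on, returning to $2\pi$ after a full cycle. The three span arcs then sit at the axis intervals
\[
A_1 = [-\tau_{pq},\, 0], \quad A_2 = [\gamma_q - \tau_{pq} - \tau_{qr},\, \gamma_q - \tau_{pq}], \quad A_3 = [2\pi - \gamma_p,\, 2\pi - \gamma_p + \tau_{rp}].
\]
The inequalities above force strictly positive gaps between consecutive intervals and show the overall axis range is $2\pi - (\gamma_p - \tau_{pq} - \tau_{rp}) < 2\pi$, so projection to $\mathbb{S}^1$ is injective on the range and the three spans are pairwise disjoint. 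For the antipodal claim I check that each shifted interval $A_i + \pi$ misses every $A_j$ on $\mathbb{S}^1$; each such check reduces, via the angle-sum identity, to an inequality of the form $\gamma_i < \pi$, which already holds.

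For (ii) I first observe that $\pi(p,r)$ is the reverse of the side $\pi(r,p)$, so every edge direction flips by $\pi$ and $\Span(\pi(p,r))$ coincides with $A_3 - \pi = [\pi - \gamma_p,\, \pi - \gamma_p + \tau_{rp}]$ on the axis. Given $\phi_1 \in A_1$ (so $\phi_1 \leq 0$) and $\phi_2 \in A_2$, the gap $\phi_2 - \phi_1$ is at most $\gamma_q < \pi$, so $[\phi_1, \phi_2]$ is exactly the smaller arc $\mathbb{S}^1[\phi_1,\phi_2]$. The angle-sum identity then gives $\phi_1 \leq 0 \leq \pi - \gamma_p$ and $\phi_2 \geq \gamma_q - \tau_{pq} - \tau_{qr} \geq \pi - \gamma_p + \tau_{rp}$ (the latter reducing to $\gamma_r \leq \pi$), so $[\phi_1, \phi_2] \supset \Span(\pi(p,r))$, which is exactly what (ii) asserts.

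The main obstacle is the bookkeeping: fixing a consistent orientation around the pseudo-triangle, keeping track of which corner's exterior angle separates which two span arcs on the circle, and remembering that $\pi(p,r)$ and $\pi(r,p)$ are traversed in opposite directions and therefore have $\pi$-antipodal spans. Once the axis picture is in place, both parts of the lemma follow mechanically from the single angle-sum identity together with the convexity bounds $\gamma_i < \pi$.
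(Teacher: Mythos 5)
Your proof is correct, but it takes a genuinely different route from the paper's. You argue via the turning-number identity for a simple closed polygon: after orienting the boundary so that $p\to q\to r\to p$ is counterclockwise, the corner jumps $\gamma_p,\gamma_q,\gamma_r\in(0,\pi)$ minus the monotone clockwise side turnings $\tau_{pq},\tau_{qr},\tau_{rp}$ sum to $2\pi$, and after lifting directions to the real line every claim---pairwise disjointness, absence of antipodal pairs, and the containment in (ii)---reduces to linear inequalities such as $\gamma_i<\pi$ and $\gamma_i>\tau_{pq}+\tau_{qr}+\tau_{rp}$; I checked the interval positions $A_1,A_2,A_3$ and the reductions (e.g.\ the $A_2+\pi$ versus $A_3$ check reduces to $\gamma_r<\pi$, and the inclusion in (ii) to $\gamma_r\leq\pi$), and they are right. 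The paper instead proves (i) by a tangency argument: if two sides shared a common or antipodal direction, a common tangent line would force the third side, being a shortest path, to pass through the tangency point, contradicting that the sides are internally disjoint; and it proves (ii) by comparing with the Euclidean triangle $pqr$, whose chord directions lie in the three respective spans and cannot fit in a semicircle (their exterior angles sum to $2\pi$), which pins down the cyclic order of the spans. Your unwrapping argument is more mechanical and quantitative---it gives explicit arc locations and gap sizes, the stronger statement $\tau_{pq}+\tau_{qr}+\tau_{rp}<\pi$, and uses only that the region is a simple polygon with exactly three convex vertices, not the geodesic property of the sides beyond that---whereas the paper's proof is shorter and leans directly on the shortest-path structure. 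Two small points you should make explicit when writing it up: the counterclockwise orientation of $p\to q\to r\to p$ is without loss of generality only after allowing a reflection of the plane (the statement is reflection-invariant, and $\mathbb{S}^1[\phi_1,\phi_2]$ is symmetric in its arguments), and $\Span(\pi(p,r))$ is the antipodal arc of the span of that side as traversed in the boundary walk, which you do handle correctly via the shift by $\pi$.
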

\medskip
\noindent Finally, we need the following observation about furthest neighbors. It states that
we extend the first edge of $\pi(q,\fn(q))$ backwards, and then move $q$ along this extension,
then its furthest neighbor does not change.
\begin{observation}\label{obs:walk-back}
For any $q^* \in \bext(\pi(q,\fn(q))$ we have $\fn(q^*)=\fn(q)$. 
\end{observation}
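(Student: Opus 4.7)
The plan is to verify that $f := \fn(q)$ is itself a furthest neighbor of $q^*$ in $P$, so that under the paper's tie-breaking convention we may choose $\fn(q^*) = \fn(q)$. Let $v$ be the second vertex of $\pi(q,f)$, i.e., the far endpoint of its first edge (if $\pi(q,f)$ consists of a single edge then $v = f$ and the argument below simplifies). By the definition of $\bext$, the segment $\overline{q^* q}$ lies in $\P$ and is collinear with $\overline{q v}$, so $\overline{q^* v}$ is also a straight segment inside $\P$.

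The first step is to establish the identity
\[
\|\pi(q^*, f)\| \;=\; \|q^* q\| + \|\pi(q, f)\|.
\]
I would argue this by concatenating $\overline{q^* v}$ with the tail $\pi(v,f)$ of $\pi(q,f)$. The resulting polyline from $q^*$ to $f$ lies in $\P$ and its only interior turns are exactly the interior turns of $\pi(q,f)$, which sit at reflex vertices of $\bd\P$. By the classical characterization of geodesic shortest paths in a simple polygon as taut polylines that bend only at reflex vertices, this concatenation is itself the shortest path from $q^*$ to $f$, and its length equals $\|q^* v\| + \|\pi(v, f)\| = \|q^* q\| + \|\pi(q, f)\|$.

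The second step is a direct application of the triangle inequality for the geodesic metric. For any $p \in P$,
\[
\|\pi(q^*, p)\| \;\leq\; \|\pi(q^*, q)\| + \|\pi(q, p)\| \;=\; \|q^* q\| + \|\pi(q, p)\| \;\leq\; \|q^* q\| + \|\pi(q, f)\| \;=\; \|\pi(q^*, f)\|,
\]
where the middle equality uses $\overline{q^* q} \subset \P$, and the last inequality uses $f = \fn(q)$. Hence $f$ maximizes the geodesic distance from $q^*$ over all of $P$.

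The only non-routine point is the tautness claim in the first step: one must verify that pulling $q$ backwards along the first edge of $\pi(q,f)$ does not allow the shortest path to $f$ to latch onto a new reflex vertex. This is true because the direction of travel arriving at $v$ is inherited unchanged from $\pi(q,f)$, so the bend at $v$ remains taut; no earlier reflex vertex can intervene because the entire segment $\overline{q^* v}$ already lies in $\P$. Everything else in the proof is a one-line application of triangle inequality, so I expect this tautness step to be the sole delicate issue.
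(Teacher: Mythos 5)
Your proof is correct and follows essentially the same route as the paper: both establish $\|\pi(q^*,\fn(q))\| = \|\overline{q^*q}\| + \|\pi(q,\fn(q))\|$ by observing that the concatenated path bends only at the reflex vertices already present in $\pi(q,\fn(q))$ (the paper cites the same tautness characterization of geodesics in a simple polygon), and then conclude via the triangle inequality. The delicate point you flag at the end is exactly the step the paper handles with that cited lemma, so nothing is missing.
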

\begin{proof}
Intuitively, this is true because when we move $q$ to $q^*$ along $\bext(\pi(q,\fn(q)))$, its distance to $\fn(q)$ experiences the maximal possible increase (which is the full length of $\overline{q^*q}$). Formally, the observation follows if we can show that $\pi(q^*,\fn(q)) = \overline{q^*q} \cup \pi(q,\fn(q))$, since then $\|\pi(q^*,p)\| \leq \|\overline{q^*q}\| + \|\pi(q,p)\|\leq  \|\overline{q^*q}\| + \|\pi(q,\fn(q))\| = \|\pi(q^*,\fn(q)\| $ for all $p\in P$. It is known that any path that lies in $\P$ is shortest if and only if all its internal vertices bend around reflex vertices of $\P$~\cite[Lemma 1]{alkema_et_al}. Since the paths $\overline{q^*q} \cup \pi(q,\fn(q))$ and $\pi(q,\fn(q))$ have the same set of internal vertices, this concludes the proof.
\end{proof}
We can now prove the lemma that forms the basis of our coreset construction. 
\begin{lemma}\label{lem:inside-rch}
Let $q\in \P$ be a query point. Then either $\pi(q,\fn(q))$ intersects $\Gamma^*$ or there is a point~$q^*\in \Gamma$ with $\fn(q^*)=\fn(q)$. Moreover, if $q\notin \rch(P)$ then 
$\pi(q,\fn(q))$ always intersects $\Gamma^*$.
\end{lemma}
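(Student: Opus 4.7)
The plan is to proceed by cases depending on whether $\pi(q,\fn(q))$ intersects $\Gamma^*$. If it does, the first disjunct of the lemma holds automatically, so the nontrivial case is when the path avoids $\Gamma^*$ entirely. In that case $q$ and $\fn(q)$ lie in a common connected component of $\P\setminus\Gamma^*$, say $\Hone$, and the whole path $\pi(q,\fn(q))$ remains in the closure of $\Hone$. My goal is then to produce $q^*\in\Gamma$ with $\fn(q^*)=\fn(q)$, and the natural candidate, by Observation~\ref{obs:walk-back}, is any point in $\Gamma\cap\bext(\pi(q,\fn(q)))$. So the central sub-claim is that $\bext(\pi(q,\fn(q)))$ meets~$\Gamma$.

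To establish this sub-claim I would work with the pseudo-triangle $\triangle_\pi qp_1p_2$, whose corners $c,a,b$ satisfy $a,b\in\Gamma$ and $c\in\Hone$. Since $\pi(q,\fn(q))$ cannot cross $\Gamma^*$ while still connecting $q$ to $\fn(q)$ on the same side of the cut, Lemma~\ref{lem:geo-triangle} forces the first edge of $\pi(q,\fn(q))$ to have its direction in the common-prefix span $\Span(\pi(q,c))$, aligned with the direction from $q$ into the pseudo-triangle -- otherwise the span condition of Lemma~\ref{lem:geo-triangle}(ii) would require the path to cross $\Gamma$ on its way to $\fn(q)$. Consequently the backward extension of that first edge points away from the pseudo-triangle, and a second application of Lemma~\ref{lem:geo-triangle}(ii) pins this extension down to cross the $\Gamma$-side $\pi(a,b)$ before reaching $\bd\P$: the two straight extension segments $\bext(\Gamma)$ and $\fext(\Gamma)$ are ruled out by the direction constraints coming from the pseudo-triangle's spans. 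This yields $q^*\in\Gamma$, and Observation~\ref{obs:walk-back} gives $\fn(q^*)=\fn(q)$.

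For the moreover part, assume $q\notin\rch(P)$; then $q$ lies in a \emph{pocket} $X$ of the relative convex hull, bounded by an arc of $\bd\P$ together with a ``lid'' that is a shortest path between two consecutive vertices of $\rch(P)$. Since $\fn(q)\in P\subseteq \rch(P)$, the path $\pi(q,\fn(q))$ must cross the lid of $X$. I would then split on which pocket contains $q$: the pocket whose lid is $\Gamma$ is separated from the rest of $\P$ by $\Gamma\subseteq\Gamma^*$ itself; any pocket incident to $p_1$ or $p_2$ contains a portion of $\bext(\Gamma)$ or $\fext(\Gamma)$, which $\pi(q,\fn(q))$ must cross to leave $X$; and for any other pocket one uses the maximality of $\|\Gamma\|$ to deduce that $\fn(q)$ necessarily lies on the opposite side of $\Gamma^*$ from $q$, so again $\pi(q,\fn(q))$ crosses $\Gamma^*$.

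The hardest step is the span-based argument pinning down $\bext(\pi(q,\fn(q)))$: it must simultaneously control the direction of the first edge of $\pi(q,\fn(q))$ and the direction of its backward extension, and rule out the possibility that the extension escapes along a straight portion of $\Gamma^*\setminus\Gamma$ rather than reaching $\Gamma$ itself. The pocket case analysis in the moreover part is more mechanical once set up, but requires care in distinguishing the three pocket types relative to $p_1$ and $p_2$.
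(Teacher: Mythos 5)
There is a genuine gap, and it sits exactly where the paper does its real work. Your central sub-claim---that if $\pi(q,\fn(q))$ avoids $\Gamma^*$ then $\bext(\pi(q,\fn(q)))$ must meet $\Gamma$---is not a topological/directional fact about two points on the same side of the cut; it is false without the metric hypotheses. Take $\P$ a large convex polygon, $p_1=(0,0)$, $p_2=(10,0)$, $q=(5,50)$ and a point $p=(100,55)$: the segment $\overline{qp}$ stays in $\Hone$, yet its backward extension leaves $q$ almost horizontally and reaches $\bd\P$ without ever touching $\Gamma^*$. What rules such configurations out is precisely that $p=\fn(q)$ and that $\|\pi(p_1,p_2)\|=\diam(P)$, and your argument never uses either. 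Concretely, the step ``Lemma~\ref{lem:geo-triangle} forces $\first(\pi(q,\fn(q)))$ to point into the pseudo-triangle of $\triangle_\pi q p_1 p_2$'' is unsupported: Lemma~\ref{lem:geo-triangle}(ii) constrains the spans of the three \emph{sides} of a pseudo-triangle, and $\pi(q,\fn(q))$ is not a side of $\triangle_\pi cab$, so the lemma says nothing about its first edge. The argument is also internally inconsistent: if the backward extension ``points away from the pseudo-triangle'' it has no reason to cross the side $\pi(a,b)$ of that pseudo-triangle. The paper's proof instead builds the pseudo-triangle at $p=\fn(q)$ (with corners on $\Gamma$), partitions $\Hone$ into regions $R_1,\dots,R_4$, and eliminates every region except the pseudo-triangle $R_2$ by \emph{metric} contradictions: the diameter maximality plus a crossing/triangle-inequality computation for $R_1\cup R_3$, and the obtuse-angle property of geodesic triangles \cite{Pollack1989} for $R_4$; only then does Observation~\ref{triangle-intersect} give the intersection with $\Gamma$ and Observation~\ref{obs:walk-back} the point $q^*$. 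Without an argument of this kind your sub-claim does not follow.

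The ``moreover'' part has the same problem. In the paper it is free: the only surviving case forces $q\in R_2\subset\rch(P)$, so $q\notin\rch(P)$ implies the path meets $\Gamma^*$. Your pocket case analysis instead rests on two unjustified steps: that a pocket incident to $p_1$ or $p_2$ containing a piece of $\bext(\Gamma)$ or $\fext(\Gamma)$ forces $\pi(q,\fn(q))$ to cross that piece (the extension portion need not separate $q$ from the pocket's lid), and that for any other pocket ``maximality of $\|\Gamma\|$'' puts $\fn(q)$ on the opposite side of $\Gamma^*$---which is neither proved nor true in general (the conclusion of the lemma is only that the path crosses $\Gamma^*$, possibly along an extension segment, not that $q$ and $\fn(q)$ are separated by it). So both halves of the proposal need the quantitative furthest-neighbor/diameter arguments that the paper supplies and your write-up omits.
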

\begin{proof}
If $q\in \Gamma$, then clearly the lemma holds, so we can assume that $q\notin \Gamma$. We now will assume that $\pi(q,\fn(q))$ does not intersect $\Gamma^*$ and show that then there is a point~$q^*\in \Gamma$ with $\fn(q^*)=\fn(q)$. For the rest of the proof, let $p=\fn(q)$ and 
assume without loss of generality that $q$ lies in~$\Hone$. 

Since $\pi(q,\fn(q))$ does not intersect $\Gamma^*$, we have  $p\in \Hone$. Let $u$ be the point where $\pi(p,p_1)$ and $\pi(p,p_2)$ split, let $v$ be the point where $\pi(p_1,p_2)$ and $\pi(p_1,p)$ split and $w$ the point where $\pi(p_2,p_1)$ and $\pi(p_2,p)$ split. Note that $p \in \bd \rch(P)$ and also that the angle formed between $\first(\pi(p,p_1)),\first(\pi(p,p_2))$ within the geodesic triangle $\triangle_\pi{pp_1p_2}$  is convex. Now consider $\fext\left(\pi(p_1,p)\right)$ and $\fext\left(\pi(p_2,p)\right)$ and let $\ell_1=\fext\left(\pi(p_1,p)\right)\setminus \pi(p_1,p)$ and $\ell_2=\fext(\pi(p_2,p))\setminus \pi(p_2,p)$. Also, let $p_1',p_2'$ denote the endpoints of $\Gamma^*$. We partition $\Hone$ into at most four disjoint  non-empty regions, as illustrated in Figure~\ref{fig:lemma-5}.
\begin{figure}
\begin{center}
\includegraphics{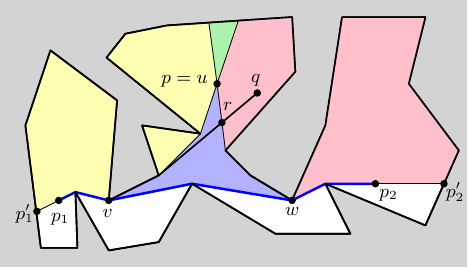}
\end{center}
\caption{The yellow region corresponds to $R_1$, the blue region to $R_2$, the red region to $R_3$, and the green region to $R_4$. In this example, the query point~$q$ lies in $R_3$, so $\pi(q,p_1)$ intersects $\pi(p,p_2)$, at a point~$r$. }
\label{fig:lemma-5}
\end{figure}
\begin{itemize}
    \item$R_1$ corresponds to the region bounded by $\pi(p_1',p)\cup \ell_2$ and $\partial \P$,
    \item  $R_2$ is the region enclosed by $\triangle_\pi{uvw}$,
    \item $R_3$ corresponds to the region bounded by $\pi(p_2',p)\cup \ell_1$ and $\partial \P$,
    \item $R_4$ is the region bounded by $\ell_1\cup \ell_2$ and $\partial \P$.
\end{itemize}
   If $u \neq p$ then $R_4$ is empty, therefore it suffices to consider the case $u= p$. Then we have the following subcases:
\begin{itemize}
    \item Case~I: $q\in R_2$. 
    Notice that $q \in \triangle_\pi{uvw}$ and $q$ does not belong to $\pi(v,w)$. Therefore by Lemma \ref{triangle-intersect}, if $\pi(q,p)$ is extended to the direction of $q$, it will intersect $\pi(v,w)\subset \Gamma$ at some point~$q^*$. By Observation~\ref{obs:walk-back},
    this point $q^*$ has the desired property.  
    \item Case~II: $q\in R_1\cup R_3$. Assume without loss of generality that $q\in R_3$; the case where $q\in R_1$ is symmetrical. Now $\pi(q,p_1)$ has to intersect $\pi(p,p_2)$. The reason is that $R_3$ is bounded by parts of shortest paths that have $p_2$ as an endpoint. Assume that $\pi(q,p)$ crosses $\pi(p,p_2)$ at point $r$. We have that 
    \[
    \|\pi(q,p)\| >\|\pi(q,p_1)\|
    \]
    since $p=\fn(q)$ and we may assume that $ \|\pi(q,p)\| \neq \|\pi(q,p_1)\|$ as otherwise we could use $p_1$ as $\fn(q)$. Also
    \[
    \|\pi(p_1,p_2)\|\geq \|\pi(p,p_2)\|
    \]
    since  $\| \pi(p_1,p_2) \| = \diam(P)$.
   Adding the above inequalities and using that $r$ lies on the paths $\pi(q,p_1)$ and $\pi(p,p_2)$ we get
    \[
    \begin{array}{lll}
        \|\pi(q,p)\|+\|\pi(p_1,p_2)\| & > & \|\pi(q,r)\|+\|\pi(r,p_1)\|+\|\pi(p,r)\|+\|\pi(r,p_2)\| \\ 
          & > & \left(\|\pi(q,r)\|+\|\pi(r,p)\|\right)+ \left(\|\pi(p_1,r)\|+\|\pi(r,p_2)\|\right),
    \end{array}
    \]
   which is a contradiction, because from the triangle inequality we have that $\|\pi(q,p)\|\leq\|\pi(q,r)\|+\|\pi(r,p)\|$ and $\|\pi(p_1,p_2)\|\leq\|\pi(p_1,r)\|+\|\pi(r,p_2)\|$.
   
   \item $p\in R_4$: since the angle formed between $\first(\pi(p,p_1))$ and $\first(\pi(p,p_2))$ is convex, then either the angle between $\first(\pi(p,q))$ and $\first(\pi(p,p_1))$ is obtuse, or the angle between $\first(\pi(p,q))$ and $\first(\pi(p,p_2))$ is obtuse. Then by known properties of geodesic triangles \cite{Pollack1989}, we have that $\|\pi(q,p)\|<\|\pi(q,p_1)\|$ or $\|\pi(q,p)\|<\|\pi(q,p_2)\|$ respectively, which contradicts that $p=\fn(q)$.
\end{itemize}
So, in all cases we get a contradiction, except for Case~I where we have shown the
existence of a point~$q^*\in \Gamma$ with $\fn(q^*)=\fn(q)$. Note that in Case~I
we have $q\in R_2$. Since $R_2\subset \rch(P)$, this also shows that if $q\notin \rch(P)$, then 
$\pi(q,\fn(q))$ has to intersect $\Gamma^*$.
\end{proof}

\subsection{The coreset construction and its analysis}
Let $k=\ceil{4\pi/\eps}$ and let $\phi_1,\phi_2,...,\phi_k$ denote $k$
equally spaced points on $\mathbb{S}^1$, which we will refer to as \emph{canonical directions}.
Note that $|\phi_{i+1}-\phi_i| \leq \eps/2$ for all $i$.
We define $\Phi_j:=\mathbb{S}^1[\phi_j,\phi_{j+1}]$ and call it a 
\emph{canonical cone (of directions)}.
We split $\Gamma$ into $\ell := \ceil{6/\eps}$ equal-length pieces, 
$\Gamma_1,\Gamma_2,...,\Gamma_{\ell}$, so that each piece $\Gamma_i$ has length
at most $(\eps/6)\cdot \|\Gamma\|$.
We denote the endpoints of a piece $\Gamma_i$ by $\gamma_i$ and $\gamma_{i+1}$, 
so $\Gamma_i = \Gamma[\gamma_i,\gamma_{i+1}]$. 

A \emph{pocket of $\Gamma_i$ in $\Hone$} is defined to be a simple polygonal 
region $R\subset \Hone$ whose boundary consists of a portion of $\bd \Hone$ and a 
portion of~$\Gamma_i$. Note that the points where these two portions meet must 
be reflex vertices of $\bd\P$ whose incident edges belong to $\bd \Hone$. 
A pocket of $\Gamma_i$ in $\Htwo$ is defined similarly;
see Figure~\ref{fig:pockets}.
\begin{figure}[b]
\begin{center}
\includegraphics{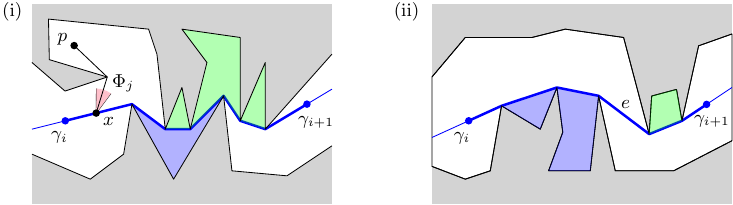}
\end{center}
\caption{Pockets in $\Hone$, which is the region below $\Gamma^*$, are shown in blue
and pockets in $\Htwo$ are shown in green. (i) The point $p$ is reachable from $x$
in direction of the cones $\Phi_j$, which is shown in pink. (ii) the edge $e$ is an intermediate edge.}
\label{fig:pockets}
\end{figure}
We refer to an edge of $\Gamma_i$ that belongs to a pocket of $\Gamma_i$ in 
$\Hone$ or $\Htwo$ as a \emph{pocket edge of $\Gamma_i$}. Any piece~$\Gamma_i$ 
has at most three non-pocket edges. Indeed, the first and last edge of~$\Gamma_i$ 
need not be pocket edges, since $\gamma_i$ and $\gamma_{i+1}$ need not be reflex vertices. 
Any other non-pocket edge must be an edge connecting two reflex vertices of $\P$,
a reflex vertex $u$ from $\Hone \cap \bd \P$ and
a reflex vertex~$v$ from $\Htwo \cap \bd\P$. Moreover, for $uv$ to be a non-pocket edge,
all vertices of $\Gamma_i$ that lie behind $v$ (as seen from $u$) must also 
be reflex vertices of $\Htwo \cap \bd\P$, and all vertices of $\Gamma_i$ that 
lie behind $u$ (as seen from $v$) must also 
be reflex vertices of $\Hone \cap \bd\P$. Hence, there can be at most one such edge,
which we will refer to as an \emph{intermediate edge of $\Gamma_i$}.
See Figure~\ref{fig:pockets}(ii) for an illustration.

For a point $p\in P$, a piece $\Gamma_i$ and a canonical cone $\Phi_j$, we will say that $p$ is \emph{reachable from $\Gamma_i$ in the cone $\Phi_j$}, if there exists a point $x\in \Gamma_i$ such that $\dir(x,p) \in \Phi_j$ and $\pi(x,p)\cap \Gamma_i = x$;
see Figure~\ref{fig:pockets}(i). In other words, there is a point $x\in \Gamma_i$ such that the shortest path $\pi(x,p)$ immediately leaves $\Gamma_i$ and  the direction of the first edge of $\pi(x,p)$ lies in the cone $\Phi_j$.  We may then just say that $p$ is reachable from $\Gamma_i$, or that $p$ is reachable from the edge of $\Gamma_i$ that $x$ belongs to.   Now we are ready to define the points placed in our coreset for each $\Gamma_i$.

\subparagraph{Coreset definition.}
For each piece $\Gamma_i$ we will put a set~$\coreset_i$ of points into the coreset.
The final coreset $\coreset$ is then defined as $\coreset=\bigcup_i \coreset_i$.
The points placed into $\coreset_i$ will be of five different types, as explained next.
\begin{itemize}
    \item If $\Gamma_i$ has an intermediate edge $e$, let $E_1,E_2$ denote the two half-polygons that $\P$ is split into by $e$. Then we start by placing in $\coreset_i$ the points $\alpha_i$, $\alpha'_i$ defined as follows: $\alpha_i$ is the point in $P\cap E_1$ which is furthest from $\gamma_i$ over all points $p\in P\cap E_1$. Similarly, $\alpha_i'$ is the point furthest from $\gamma_i$ over all points $p\in P\cap E_2$.
\end{itemize}
Note that for the points $\alpha_i,\alpha'_i$ we do not care whether they lie in $\Hone$ or~$\Htwo$.
We will also place in $\coreset_i$ some points of $P$ that must belong to $\Hone$,
and some points of $P$ that must belong to $\Htwo$. We denote these two sets of points by $\coreset_i(\Hone)$ and $\coreset_i(\Htwo)$ respectively.  We then have $\coreset_i=\coreset_i(\Hone)\cup \coreset_i(\Htwo) \cup\{ \alpha_i,\alpha'_i\}$. We now explain which points are placed in $\coreset_i(\Hone)$. The set $\coreset_i(\Htwo)$ is defined analogously. 
\begin{itemize}
    \item We place in $\coreset_i(\Hone)$ the point $r_i\in P\cap \Hone$ that is furthest from $\gamma_i$ over all points $p\in P\cap \Hone$.
    \item Let $X_i\subset P$ denote the points that lie in a pocket of $\Gamma_i$. Then we place in $\coreset_i(\Hone)$ the point $x_i\in X_i$ that is furthest from $\gamma_i$ over all points $p\in X_i\cap \Hone$.Note this is not done for each pocket separately: we only pick a single point over all pockets.
    \item Let $F_{i,j}$ denote the set of points that are reachable from $\first(\Gamma_i)$ in the cone $\Phi_j$. For each set $F_{i,j}$ we place in $\coreset_i(\Hone)$ the point $f_{i,j}$ defined as follows: $f_{i,j}$ is the point in $F_{i,j} \cap \Hone$ that is furthest away from $\gamma_i$.
    \item Let $L_{i,j}$ denote the set of points that are reachable from $\last(\Gamma_i)$ in the cone $\Phi_j$. For each set $L_{i,j}$ we place in $\coreset_i(\Hone)$ the point $\ell_{i,j}$ defined as follows: $\ell_{i,j}$ is the point in $L_{i,j} \cap \Hone$ that is furthest away from $\gamma_i$.
\end{itemize}   

\subparagraph{Proof of Theorem~\ref{thm:coreset} when the query point lies inside $\rch(P)$.}
Without loss of generality, we assume that $q \in \Htwo$.  
For brevity, we define $p := \fn(q)$
to be the furthest neighbor~$q$. The following lemma will used repeatedly in our proof.
\begin{lemma}\label{lem:cross-case}
Let $p^*\in \coreset$ be such that $\pi(q,p^*)$ intersects $\Gamma_i$,
and suppose $\|\pi(\gamma_i,p^*)\| \geq \|\pi(\gamma_i,p)\|$. 
Then $\|\pi(q,p^*)\|\geq (1-\varepsilon)\cdot \|\pi(q,p)\|$. Hence, $p^*$ is a good approximation for $\fn(q)$. 
\end{lemma}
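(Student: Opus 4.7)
Let $y$ be an intersection point of $\pi(q,p^*)$ with $\Gamma_i$. Since $y$ lies on the shortest path from $q$ to $p^*$, we have the identity
\[
\|\pi(q,p^*)\| \;=\; \|\pi(q,y)\| + \|\pi(y,p^*)\|.
\]
The strategy is to lower-bound the second term using the hypothesis $\|\pi(\gamma_i,p^*)\|\geq \|\pi(\gamma_i,p)\|$, and then compare to $\|\pi(q,p)\|$ via the triangle inequality through the ``relay point'' $\gamma_i$.

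The first step is to observe that both $y$ and $\gamma_i$ lie on $\Gamma$, which is itself a shortest path, so the subpath of $\Gamma$ between them is a geodesic and hence $\|\pi(y,\gamma_i)\| \leq \|\Gamma_i\| \leq (\eps/6)\,\|\Gamma\| = (\eps/6)\diam(P)$. Applying the triangle inequality twice,
\[
\|\pi(y,p^*)\| \;\geq\; \|\pi(\gamma_i,p^*)\|-\|\pi(\gamma_i,y)\| \;\geq\; \|\pi(\gamma_i,p)\|-\|\Gamma_i\|,
\]
where the second step uses the hypothesis of the lemma. Plugging this into the decomposition of $\|\pi(q,p^*)\|$ yields
\[
\|\pi(q,p^*)\| \;\geq\; \|\pi(q,y)\|+\|\pi(\gamma_i,p)\|-\|\Gamma_i\|.
\]

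Next I would bound $\|\pi(q,p)\|$ from above by routing through $y$ and $\gamma_i$: $\|\pi(q,p)\|\leq \|\pi(q,y)\|+\|\pi(y,\gamma_i)\|+\|\pi(\gamma_i,p)\| \leq \|\pi(q,y)\|+\|\Gamma_i\|+\|\pi(\gamma_i,p)\|$. Rearranging gives $\|\pi(\gamma_i,p)\| \geq \|\pi(q,p)\|-\|\pi(q,y)\|-\|\Gamma_i\|$, which substituted into the previous inequality cancels the $\|\pi(q,y)\|$ terms and leaves
\[
\|\pi(q,p^*)\| \;\geq\; \|\pi(q,p)\| - 2\|\Gamma_i\|.
\]

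The final step uses that $p=\fn(q)$ implies $\|\pi(q,p)\|\geq \diam(P)/2$: indeed $\|\pi(q,p_1)\|+\|\pi(q,p_2)\|\geq \|\pi(p_1,p_2)\|=\diam(P)$, so one of these is at least $\diam(P)/2$ and thus so is $\|\pi(q,p)\|$. Combined with $2\|\Gamma_i\|\leq (\eps/3)\diam(P)\leq (2\eps/3)\|\pi(q,p)\|\leq \eps\|\pi(q,p)\|$, we obtain the claim. There is no real obstacle in this argument; the proof is just a careful chaining of triangle inequalities, where the main point is that the subdivision length $\|\Gamma_i\|=O(\eps)\cdot\diam(P)$ was chosen precisely to absorb the error caused by replacing the anchor $q$ by the anchor $\gamma_i$.
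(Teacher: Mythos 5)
Your proposal is correct and is essentially the paper's own argument: both chain triangle inequalities through the relay point $\gamma_i$ and an intersection point of $\pi(q,p^*)$ with $\Gamma_i$, use the hypothesis $\|\pi(\gamma_i,p^*)\|\geq\|\pi(\gamma_i,p)\|$ to arrive at $\|\pi(q,p^*)\|\geq\|\pi(q,p)\|-2\|\Gamma_i\|$, and absorb the error via $\|\Gamma\|\leq 2\|\pi(q,\fn(q))\|$. Your additive rearrangement and the explicit justification that $\|\pi(q,\fn(q))\|\geq \diam(P)/2$ are only cosmetic differences from the paper's derivation.
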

\begin{proof}
Let $z$ be a point where $\pi(q,p^*)$ intersects $\Gamma_i$; see Figure~\ref{fig:cross-case}.
Then we have
\[
\begin{array}{lll}
 \|\pi(q,p)\|  & \leq & \|\pi(q,z)\|+\|\pi(z,\gamma_i)\| +\|\pi(\gamma_i,p)\| \hfill \mbox{(Triangle Inequality)} \\
               & \leq & \|\pi(q,z)\|+\|\pi(z,\gamma_i)\|+\|\pi(\gamma_i,p^*)\| \hfill \mbox{(by assumption)} \\
               & \leq & 2 \|\pi(z,\gamma_i)\| + \|\pi(q,z)\| + \|\pi(z,p^*)\| \hfill \mbox{(Triangle Inequality)} \\
               & \leq & 2\|\Gamma_i\|+ \|\pi(q,p^* )\| \hfill \mbox{(since $\|\pi(z,\gamma_i)\|\leq \|\Gamma_i\|$ and  $z\in\pi(q,p^*)$)} \\
               & < & \frac{\varepsilon}{2}\cdot\|\Gamma\|+\|\pi(q,p^*)\|  \hfill \mbox{(since $\|\Gamma_i\| \leq \frac{\eps}{6}\cdot\|\Gamma \|$}) \\
               & \leq & \varepsilon\cdot \|\pi(q,p)\|+\|\pi(q,p^*)\|  \hspace*{10mm} \mbox{(since $\|\Gamma\| \leq 2 \| \pi(q,\fn(q)) \|$ for any $q\in \P$)}
\end{array}
\]
and so indeed $\|\pi(q,p^*)\|\geq (1-\varepsilon)\cdot \|\pi(q,p)\|$.
\end{proof}
\begin{figure}
\begin{center}
\includegraphics{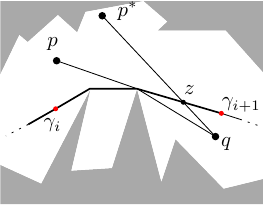}
\end{center}
\caption{Illustration for the proof of Lemma~\ref{lem:cross-case}.}
\label{fig:cross-case}
\end{figure}
If $p\in \{p_1,p_2\}$, then Theorem~\ref{thm:coreset} trivially holds since $\{p_1,p_2\}\subset \coreset$. 
So now we assume that $p\notin\{p_1,p_2\}$. Lemma \ref{lem:inside-rch} gives us that 
either $\pi(q,p)$ crosses $\Gamma^*$ or the backwards extension of $\pi(q,p)$ 
will intersect $\Gamma$. Note that in the former case $\pi(q,\fn(q))$ actually intersects $\Gamma$, 
since $q\in\rch(P)$ and shortest paths between points inside~$\rch(P)$ do not leave $\rch(P)$.
We first consider the latter case, where the backwards extension of $\pi(q,p)$ intersects~$\Gamma$.
\begin{lemma}\label{lem:first-case}
If the backwards extension of $\pi(q,p)$ intersects $\Gamma$, then 
there is a point $p^*\in\coreset$ such that $\|\pi(q,p^*)\|\geq (1-\varepsilon) \cdot \|\pi(q,p)\|$. 
\end{lemma}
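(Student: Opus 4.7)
The plan is to identify a coreset point $p^*\in\coreset_i$ that represents $p=\fn(q)$ well, and then invoke Lemma~\ref{lem:cross-case}. Let $q^*\in\bext(\pi(q,p))\cap\Gamma$ be the intersection guaranteed by the hypothesis, and let $i$ be such that $q^*\in\Gamma_i$. Observation~\ref{obs:walk-back} gives $\fn(q^*)=p$, so $p$ is the geodesic furthest neighbor of a point lying on $\Gamma_i$. I will assume $p\in\Hone$; the case $p\in\Htwo$ is symmetric and uses the $\Htwo$-variants of the coreset points.

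The central idea is to track where the shortest path $\pi(q^*,p)$ leaves $\Gamma_i$ for the last time on its way to $p$. Let $y$ be the last point of $\pi(q^*,p)\cap\Gamma_i$ (so $\pi(y,p)\cap\Gamma_i=\{y\}$), let $e$ be the edge of $\Gamma_i$ containing $y$, and let $\phi^\star:=\dir(y,p)\in\Phi_{j^\star}$. Case-splitting on the type of $e$, I select $p^*$ as follows:
\begin{itemize}
   \item If $e=\first(\Gamma_i)$, then $p\in F_{i,j^\star}\cap\Hone$; take $p^*:=f_{i,j^\star}$.
   \item If $e=\last(\Gamma_i)$, then $p\in L_{i,j^\star}\cap\Hone$; take $p^*:=\ell_{i,j^\star}$.
   \item If $e$ is the intermediate edge, splitting $\P$ into $E_1,E_2$, take $p^*$ to be $\alpha_i$ if $p\in E_1$ and $\alpha'_i$ if $p\in E_2$.
   \item If $e$ is a pocket edge, then since $\pi(y,p)$ does not re-enter $\Gamma_i$ and ends at $p\in\Hone$, it is trapped in the pocket on the $\Hone$ side bounded by $e$; hence $p\in X_i\cap\Hone$ and we set $p^*:=x_i$.
\end{itemize}
In every case $p$ and $p^*$ lie in the same defining set, and $p^*$ is the point of that set furthest from $\gamma_i$, so $\|\pi(\gamma_i,p^*)\|\geq\|\pi(\gamma_i,p)\|$ is immediate.

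What remains is to verify that $\pi(q,p^*)$ crosses $\Gamma_i$, after which Lemma~\ref{lem:cross-case} yields the conclusion. The pocket case is geometrically clear, because the pocket on the $\Hone$ side of $e$ is enclosed by part of $\Gamma_i$ together with $\bd\P$, so $q\in\Htwo$ is separated from $p^*$ by $\Gamma_i$. The intermediate-edge case follows from the chord structure of $e$, together with the fact that $q$ and the chosen $\alpha$ lie on opposite sides of this chord (which can be read off from the position of $q^*\in e$ on the extension of the first edge of $\pi(q,p)$). The main obstacle I expect is the first- and last-edge cases: to argue $\pi(q,p^*)$ crosses $\Gamma_i$ and not some other portion of $\Gamma^*$, one should exploit that the witness $x$ realizing $p^*$'s reachability from $\first(\Gamma_i)$ (resp.\ $\last(\Gamma_i)$) has $\dir(x,p^*)\in\Phi_{j^\star}$, so that the first edges of $\pi(q^*,p)$ and $\pi(x,p^*)$ differ by at most $\eps/2$ in direction; combining this directional closeness with the fact that $q$ lies on the backwards extension of the first edge of $\pi(q^*,p)$ should force $\pi(q,p^*)$ to cross $\Gamma_i$ itself.
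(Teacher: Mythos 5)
Your plan has a genuine gap at exactly the step you flag as the ``main obstacle'': to invoke Lemma~\ref{lem:cross-case} you must show that $\pi(q,p^*)$ intersects the specific piece $\Gamma_i$, and this is neither proved nor true in general. Note first that in the situation where Lemma~\ref{lem:first-case} is actually applied (Case~I of Lemma~\ref{lem:inside-rch}), the point $p=\fn(q)$ lies on the \emph{same} side of $\Gamma^*$ as $q$, so your ``WLOG $p\in\Hone$'' is not a harmless relabeling: all of your crossing arguments (the pocket case explicitly, the others implicitly) rely on $q$ and $p^*$ being separated by $\Gamma$, which fails here. Even in the opposite-side configuration, separation only forces $\pi(q,p^*)$ to cross $\Gamma^*$ \emph{somewhere}, not within $\Gamma_i$; the paper's own Subcase~II of Lemma~\ref{lem:third-case} is devoted precisely to the situation where, despite the directional closeness you appeal to, the path to the cone-representative $f_{i,j}$ crosses $\Gamma$ outside $\Gamma_i$, and there the resolution is not to establish a crossing of $\Gamma_i$ but a substantially harder length estimate (Claims~1 and~2, with the $(1+\eps/2)$ factor). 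So the heuristic ``directional closeness should force $\pi(q,p^*)$ to cross $\Gamma_i$ itself'' cannot be made to work as stated. A secondary problem: in your pocket-edge case, a pocket edge may bound a pocket on only one side of $\Gamma_i$, and $\pi(y,p)$ may leave $y$ into the other side and later cross \emph{other} pieces $\Gamma_{i'}$ without returning to $\Gamma_i$; then $p\notin X_i$ and the domination $\|\pi(\gamma_i,x_i)\|\geq\|\pi(\gamma_i,p)\|$ you need is unavailable.

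The paper's proof avoids all of this and is much shorter, because the backwards-extension hypothesis is stronger than you are using. If $x$ denotes the point where $\bext(\pi(q,p))$ meets $\Gamma$, then $\pi(x,p)=\overline{xq}\cup\pi(q,p)$, so $\|\pi(q,p)\|=\|\pi(x,p)\|-\|\pi(q,x)\|$. Taking $p^*$ to be simply the coreset point of $\coreset_i$ that is furthest from $\gamma_i$ among all points of $P$ on $p$'s side of $\Gamma^*$ (the point $r_i$ or its $\Htwo$-counterpart), one chains triangle inequalities: $\|\pi(q,p)\|\leq\|\pi(x,\gamma_i)\|+\|\pi(\gamma_i,p^*)\|-\|\pi(q,x)\|\leq 2\|\Gamma_i\|+\|\pi(x,p^*)\|-\|\pi(x,q)\|\leq 2\|\Gamma_i\|+\|\pi(q,p^*)\|$, and $2\|\Gamma_i\|\leq\eps\|\pi(q,p)\|$. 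The subtracted term $\|\pi(q,x)\|$ absorbs the detour through $x$, so no crossing condition, no Lemma~\ref{lem:cross-case}, and none of the cone, pocket, or intermediate-edge machinery is needed for this lemma; that machinery is reserved for the genuinely crossing case (Lemmas~\ref{lem:second-case} and~\ref{lem:third-case}). I would recommend rewriting your proof along these lines.
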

\begin{proof}
Let $x$ be the point where the backwards extension of $\pi(q,p)$ hits~$\Gamma$ and let $\Gamma_i$ be the piece containing $x$.
Let
$r_i^*\in \coreset$ be the point reachable from $\Gamma_i$ which is furthest from $\gamma_i$ over all points $p\in P\cap \Htwo$
and that was put into the coreset~$\coreset$. Thus 
the definition of~$r_i$ implies $\|\pi(\gamma_i,r^*_i)\| \geq \|\pi(\gamma_i,p)\|$. 
We now have
\[
\begin{array}{lll}
 \|\pi(q,p)\| & = & \|\pi(x,p)\|-\|\pi(q,x)\|   \hfill \mbox{(since $x$ is on the backwards extension)}  \\
            & \leq & \|\pi(x,\gamma_i)\|+\|\pi(\gamma_i,p)\|-\|\pi(q,x)\| \hspace*{10mm} \mbox{(Triangle Inequality)}  \\
            & \leq & \|\pi(x,\gamma_i)\|+\|\pi(\gamma_i,r^*_i)\|-\|\pi(q,x)\|  \hfill \mbox{(definition of~$r^*_i$)} \\
            & \leq & 2 \|\pi(x,\gamma_i)\| +\|\pi(x,r^*_i)\|-\|\pi(x,q)\| \hfill \mbox{(Triangle Inequality)} \\
            & \leq & 2\|\Gamma_i\|+ \|\pi(x,r^*_i)\|-\|\pi(x,q)\| \hfill \mbox{(since $x\in\Gamma_i$ and $\gamma_i\in\Gamma_i$)}  \\
            &\leq & 2\|\Gamma_i\| +\|\pi(q,r^*_i)\| \hfill \mbox{(Triangle Inequality)} \\
            & < & \varepsilon\|\pi(q,p)\|+\|\pi(q,r^*_i)\| \hfill \mbox{(since $\|\Gamma_i\| \leq \frac{\eps}{6}\|\Gamma\| < \frac{\eps}{2} \|\pi(q,p)\| $)}
\end{array}
\]
and so $\|\pi(q,r^*_i)\|\geq (1-\varepsilon) \cdot \|\pi(q,p)\|$. By taking $p^* := r^*_i$
we thus obtain a point satisfying the conditions of the lemma.
\end{proof}
We are left with the case where $\pi(q,p)$ crosses $\Gamma_i$, which is handled by the next two lemmas. 
\begin{lemma}\label{lem:second-case}
If $\pi(q,p)$ crosses $\Gamma_i$ at a pocket edge or an intermediate edge, then 
there is a point $p^*\in\coreset$ such that $\|\pi(q,p^*)\|\geq (1-\varepsilon) \cdot \|\pi(q,p)\|$. 
\end{lemma}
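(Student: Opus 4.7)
The plan is to invoke Lemma~\ref{lem:cross-case} in each sub-case, which requires producing a coreset point $p^*\in\coreset_i$ such that (i) $\pi(q,p^*)$ intersects $\Gamma_i$ and (ii) $\|\pi(\gamma_i,p^*)\|\geq\|\pi(\gamma_i,p)\|$. The choice of $p^*$ depends on the type of edge $e$ at which $\pi(q,p)$ crosses~$\Gamma_i$.

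In the intermediate-edge case, $e$ partitions $\P$ into two half-polygons $E_1\ni q$ and $E_2\ni p$ (up to relabeling). I would set $p^*:=\alpha_i'$, the point of $P\cap E_2$ furthest from $\gamma_i$; condition~(ii) then holds by construction, and $\pi(q,\alpha_i')$ must cross $e\subset\Gamma_i$ because its endpoints lie on opposite sides of $e$, giving~(i). Lemma~\ref{lem:cross-case} then delivers the approximation $\|\pi(q,\alpha_i')\|\geq(1-\eps)\|\pi(q,p)\|$.

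For the pocket-edge case, let $R$ be the pocket adjacent to $e$; WLOG $R\subset\Hone$ (the symmetric situation is handled by the $\Htwo$-analogue of $x_i$). The key claim I would establish is that $p\in R$. Otherwise $\pi(q,p)$ would both enter and exit $R$ through two $\Gamma_i$-crossings $z_1,z_2$, and the detour $\pi(q,p)[z_1,z_2]$ could be replaced by the sub-arc $\Gamma_i[z_1,z_2]$, which is itself a shortest path between its endpoints because every sub-arc of $\Gamma$ is geodesic. This produces a $(q,p)$-path no longer than $\pi(q,p)$ that avoids $R$, so one can always select a shortest-path representative that does not detour through a pocket; this contradicts the hypothesis that the chosen $\pi(q,p)$ crosses~$e$. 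With $p\in R\subseteq X_i\cap\Hone$, the definition of $x_i$ yields condition~(ii) for $p^*:=x_i$. Moreover, since $x_i$ itself lies in some pocket of $\Gamma_i$ in $\Hone$ (bounded by $\Gamma_i$ and $\partial\P$), the path $\pi(q,x_i)$ from $q\in\Htwo$ is forced to enter that pocket through a $\Gamma_i$-edge, giving~(i). Lemma~\ref{lem:cross-case} then finishes the case.

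The main obstacle is the rerouting argument forcing $p\in R$. The subtlety is the degenerate situation where the detour through $R$ and the sub-arc of $\Gamma_i$ have equal length; I would resolve this by always choosing the shortest-path representative that avoids pockets (such a choice exists without any length increase). A secondary sub-case to address is when $q$ itself lies in a pocket on its side of $e$: the symmetric reasoning then forces $p$ into the region on the opposite side of $e$, and the argument concludes by using $r_i$ (if that side is free $\Hone$) or $x_i$ again (if it is another pocket), the key point being that $\pi(q,r_i)$ or $\pi(q,x_i)$ must still cross $\Gamma_i$ in order to exit the pocket that contains~$q$.
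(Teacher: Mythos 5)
Your proposal is correct and follows essentially the same route as the paper: in every sub-case you exhibit a coreset point satisfying the two hypotheses of Lemma~\ref{lem:cross-case} ($\alpha_i$/$\alpha_i'$ for the intermediate edge, $x_i$ when $p$ sits in an $\Hone$-pocket, $r_i$ when $q$ sits in an $\Htwo$-pocket), which is exactly the paper's argument; your rerouting claim just makes explicit the paper's one-line assertion that either $q$ or $p$ must lie in a pocket. Two corrections are worth making. First, the parenthetical ``WLOG $R\subset\Hone$ (the symmetric situation is handled by the $\Htwo$-analogue of $x_i$)'' is wrong as stated: when the pocket adjacent to the crossed edge lies in $\Htwo$, it is $q$ (not $p$) that lies in it, and the furthest point of $X_i\cap\Htwo$ from $\gamma_i$ gives neither $\|\pi(\gamma_i,p^*)\|\geq\|\pi(\gamma_i,p)\|$ (since $p\in\Hone$) nor a guaranteed crossing of $\Gamma_i$ (that point and $q$ both lie in $\Htwo$, and $X_i\cap\Htwo$ may even be empty, as $q$ is not a point of $P$). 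Fortunately your ``secondary sub-case'' paragraph handles precisely this situation correctly via $r_i$ (which in fact suffices there whether or not $p$ lies in a pocket), so the parenthetical should simply be deleted rather than relied upon. Second, your worry about equal-length detours is moot: shortest paths in a simple polygon are unique, so the portion of $\pi(q,p)$ between the two crossings $z_1,z_2$ would be strictly longer than $\Gamma_i[z_1,z_2]$, and there is no ``choice of shortest-path representative'' to make; equivalently, one can simply invoke the fact that $\pi(q,p)$ and $\Gamma$ are both geodesics in a simple polygon and hence cross at most once.
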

\begin{proof}
Since we assumed that $q\in \Htwo$, we have $p\in \Hone$. 
If $\pi(q,p)$ crosses $\Gamma_i$ at a pocket edge,
then either $q$ or $p$ lies in a pocket of $\Gamma_i$. Recall that $\coreset$ contains the point $x_i\in X_i\cap \Hone$ 
that is furthest away from~$\gamma_i$, where $X_i$ denotes the set of points in a pocket of~$\Gamma_i$.
Note that $\pi(q,x_i)$ must cross $\Gamma_i$. If $p$ lies in a pocket then $\|\pi(\gamma_i,x_i)\| \geq \| \pi(\gamma_i,p)\|$ by definition of~$x_i$, and so we can apply Lemma~\ref{lem:cross-case} with~$p^*:=x_i$.
If $q$ lies in a pocket then $\|\pi(\gamma_i,r_i)\| \geq \| \pi(\gamma_i,p)\|$ by definition of~$r_i$, and so we can apply Lemma~\ref{lem:cross-case} with~$p^*:=r_i$.

If $\pi(q,p)$ crosses $\Gamma_i$ at its intermediate edge $e$, then for the point $\alpha_i\in \coreset$ 
it has to be that $\pi(q,\alpha_i)$ crosses~$e$. Hence, we can apply Lemma~\ref{lem:cross-case} with~$p^*:=\alpha_i$.
Thus in both cases the point $p^*$ has the desired property.
\end{proof}
It remains to deal with the most technically involved case, which is when $\pi(q,p)$ crosses 
$\first(\Gamma_i)$ or $\last(\Gamma_i)$. In the following proof, we denote by $\square_\pi abcd$ 
the geodesic quadrilateral with vertices $a,b,c,d \in \P$.
\begin{lemma}\label{lem:third-case}
    If $\pi(q,p)$ crosses $\first(\Gamma_i)$ or $\last(\Gamma_i)$ then 
    there is a point $p^*\in\coreset$ such that $\|\pi(q,p^*)\|\geq (1-\varepsilon) \cdot \|\pi(q,p)\|$. 
\end{lemma}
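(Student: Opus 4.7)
The plan is to apply Lemma~\ref{lem:cross-case} with $p^* := f_{i,j}$ (the case of $\last(\Gamma_i)$ is symmetric using $\ell_{i,j}$). So assume $\pi(q,p)$ crosses $\first(\Gamma_i)$ at a point $x$; since $q\in\Htwo$ and $\pi(q,p)$ crosses $\Gamma^*$, we have $p\in\Hone$. I would first identify the right cone: let $\phi := \dir(x,p)$ and let $\Phi_j$ be the canonical cone containing $\phi$. Because a shortest path cannot transversally cross the straight segment $\first(\Gamma_i)$ twice---otherwise it would admit a shortcut along the segment---the subpath $\pi(x,p)$ meets $\first(\Gamma_i)$ only at $x$, and so $p\in F_{i,j}\cap\Hone$. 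Setting $f := f_{i,j}$, the defining property of $f_{i,j}$ then gives $\|\pi(\gamma_i, f)\|\geq \|\pi(\gamma_i, p)\|$, which is one of the two preconditions of Lemma~\ref{lem:cross-case}.

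The main step, and the main obstacle, is to verify the second precondition: that $\pi(q,f)$ crosses $\Gamma_i$. Let $y\in\first(\Gamma_i)$ be the witness for the reachability of $f$ in cone $\Phi_j$, so $\dir(y,f)\in \Phi_j$ and $\pi(y,f)\cap \first(\Gamma_i)=\{y\}$. Note that $x$ and $y$ both lie on the same straight segment $\first(\Gamma_i)$ of length at most $\|\Gamma_i\|$, and the first-edge directions $\dir(x,p)$ and $\dir(y,f)$ into $\Hone$ differ by at most $\eps/2$ since both lie in $\Phi_j$. I would work inside the geodesic quadrilateral $\square_\pi q x y f$, whose sides $\pi(q,x)\subset \overline{\Htwo}$, $\pi(x,y)\subset \first(\Gamma_i)$, $\pi(y,f)\subset \overline{\Hone}$, and $\pi(f,q)$ form a closed curve straddling $\Gamma^*$; inserting the diagonal $\pi(q,f)$ splits the quadrilateral into two geodesic (pseudo-)triangles. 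I would then apply Lemma~\ref{lem:geo-triangle}(ii) to these pseudo-triangles: if $\pi(q,f)$ were to enter $\Hone$ at a point on $\Gamma^*$ outside $\Gamma_i$, the detour would force the span of some side of a pseudo-triangle to contain a direction simultaneously close to $\Phi_j$ (because of the near-parallelism of the first edges of $\pi(x,p)$ and $\pi(y,f)$) and incompatible with $\Phi_j$ (because of the detour around $\first(\Gamma_i)$), yielding a contradiction.

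Once $\pi(q,f)$ is shown to cross $\Gamma_i$, applying Lemma~\ref{lem:cross-case} with $p^* := f$ immediately yields $\|\pi(q,f)\| \geq (1-\eps)\|\pi(q,p)\|$, proving the lemma. The easy, purely topological part of the argument---that $\pi(q,f)$ meets $\Gamma^*$ somewhere---is free since $q\in\Htwo$ and $f\in\Hone$; the content lies in pinning this crossing down to the specific piece $\Gamma_i$, and this is exactly where the quadrilateral $\square_\pi q x y f$ and the direction machinery of Lemma~\ref{lem:geo-triangle} enter. I expect the bookkeeping on which pseudo-triangle side to look at, and how exactly the narrow cone $\Phi_j$ contradicts the span of that side in each sub-case (crossing to the left of $x$, to the right of $y$, or outside $\Gamma$ altogether), to be the technically delicate part of the proof.
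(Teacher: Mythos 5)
Your overall frame---pick the cone $\Phi_j$ containing $\dir(x,p)$, target $f_{i,j}$, and compare via $\gamma_i$---matches the paper, but the central step of your plan does not go through. You propose to show that $\pi(q,f_{i,j})$ must cross $\Gamma_i$ itself, deriving a contradiction from Lemma~\ref{lem:geo-triangle} whenever the crossing of $\Gamma^*$ happens outside $\Gamma_i$, and then to finish by Lemma~\ref{lem:cross-case}. But there is no such contradiction in general. Writing $q'$ for the split point of $\pi(q,x)$ and $\pi(q,f_{i,j})$ and $p'$ for the split point of $\pi(f_{i,j},q)$ and $\pi(f_{i,j},y)$, the direction machinery only forces a contradiction when $\dir(q',x)\notin\Phi_j$ (this is the paper's Subcase~III, which is indeed shown to be impossible). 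In the complementary situation $\dir(q',x)\in\Phi_j$, Lemma~\ref{lem:geo-triangle}(ii) yields $\Span(\pi(q',p'))\subset\Phi_j$, which is perfectly consistent with $\pi(q,f_{i,j})$ meeting $\Gamma$ at a point $z$ on a different piece: nothing is ``incompatible with $\Phi_j$'' there, so the configuration you hope to exclude genuinely occurs. Moreover, a crossing of $\Gamma$ outside $\Gamma_i$ is useless for Lemma~\ref{lem:cross-case}, whose proof needs $\|\pi(z,\gamma_i)\|\leq\|\Gamma_i\|$; with $z$ far from $\gamma_i$ the additive error is no longer $O(\eps)\cdot\|\Gamma\|$.

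What is missing is precisely the quantitative argument the paper supplies for that case (its Subcase~II): since every edge of $\pi(q',p')$ has direction in the narrow cone $\Phi_j$, and the first edges out of $x$ and $y$ also point into $\Phi_j$, the detour $q'\to x\to y\to p'$ is longer than $\pi(q',p')$ by at most a multiplicative factor $1+\eps/2$ plus additive terms of order $\|\Gamma_i\|$ (the paper's Claims~1 and~2, proved by extending the nearly parallel edges and comparing inside a Euclidean triangle). One then bounds $\|\pi(q,p)\|$ by a direct chain of triangle inequalities, ending with $\|\pi(q,f_{i,j})\|\geq\frac{1-\eps/2}{1+\eps/2}\|\pi(q,p)\|>(1-\eps)\|\pi(q,p)\|$; Lemma~\ref{lem:cross-case} is only used when $\pi(q,f_{i,j})$ does happen to cross $\Gamma_i$ (Subcase~I). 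A smaller point: your membership claim $p\in F_{i,j}$ also needs the observation that $\pi(x,p)$ does not meet $\Gamma_i$ again (not just $\first(\Gamma_i)$), but this is a technicality; the substantive gap is that your contradiction argument replaces, and cannot replace, the $(1+\eps/2)$-stretch estimate.
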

\begin{proof}
    Assume that $\pi(q,p)$ crosses $\first(\Gamma_i)$---the case that it crosses $\last(\Gamma_i)$ 
    is treated similarly---and let $x$ be the crossing point. Let $\Phi_j$ be the canonical cone
    such that $\dir(x,p)\in \Phi_j$. We will show that $f_{i,j}\in\coreset$ 
    is a good approximation of the furthest neighbor of $q$, so setting $p^* := f_{i,j}$ will prove the lemma. 
    Let $y\in\first(\Gamma_i)$ be a point from which $f_{i,j}$ is reachable. 
    We will consider three subcases.
    \begin{itemize}
    \item \emph{Subcase I: $\pi(q,f_{i,j})$ crosses $\Gamma_i$.} \\[2mm]
    By definition of~$f_{i,j}$ we have $\|\pi(\gamma_i,f_{i,j})\| \geq \|\pi(\gamma_i,p)\|$,
    and so it follows immediately from Lemma~\ref{lem:cross-case} that $\|\pi(q,f_{i,j})\|\geq (1-\varepsilon) \cdot \|\pi(q,p)\|$. 
    \end{itemize}
    \medskip 
    
    \noindent To define the other two subcases,
    let $q'$ denote the point where the paths $\pi(q,x)$ and $\pi(q,f_{i,j})$ split, and let $p'$ denote the 
    point where the paths $\pi(f_{i,j},q)$ and $\pi(f_{i,j},y)$ split; see Figure~\ref{fig:first_subcase2}.
    The two remaining subcases depend on whether $\dir(q',x)$ is very different from $\dir(x,p)$ or, 
    more precisely, depending on whether $\dir(q',x)$ lies in $\Phi_j$ or not. 
    
 \begin{figure}
    \begin{center}
    \includegraphics{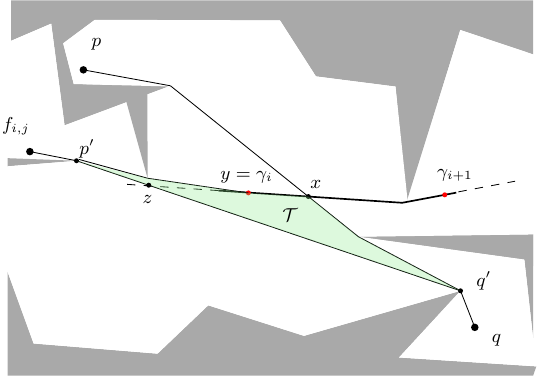}
    \end{center}
    \caption{Illustration for Subcase~II.}
    \label{fig:first_subcase2}
    \end{figure}

    \begin{figure}
    \begin{center}
    \includegraphics{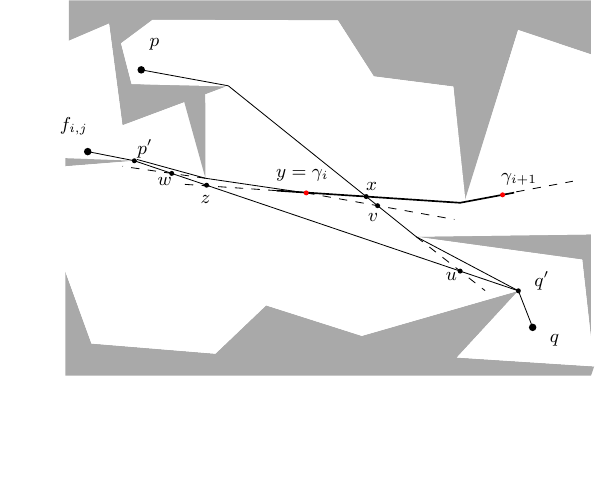}
    \end{center}
    \caption{Illustration for Subcase~II.}
    \label{fig:first_subcase1}
    \end{figure}

    \begin{itemize}
    \item \emph{Subcase II: $\pi(q,f_{i,j})$ does not cross $\Gamma_i$ and $\dir(q',x)\in \Phi_j$.} \\[2mm]
    Since also $\dir(x,p)\in \Phi_j$, we know that $\dir(e)\in\Phi_j$ 
    for every edge $e\in\pi(q',x)$. Let $z$ denote the point where $\pi(q,f_{i,j})$ crosses $\Gamma$, and 
    note that we assumed $z\not\in\Gamma_i$. An illustration is shown in Figure~\ref{fig:first_subcase2}, 
    where for ease of illustration we assumed that $y=\gamma_i$; this does not make any difference in the proof. 
  
    Consider the geodesic quadrilateral~$\square_\pi{q'xyp'}$. Then the four shortest paths forming the boundary
    of this quadrilateral are disjoint, except for shared endpoints. Since the
    quadrilateral is fully contained in~$\P$ (because $\P$ is simple), all interior vertices of
    these paths are reflex vertices of $\square_\pi{q'xyp'}$. 
    Moreover, the following holds: if $x$ lies between $z$ and $y$ on $\Gamma^*$ then 
    in ${\square_\pi}{q'xyp'}$ the angle at $x$ is reflex. Similarly, if $y$ lies between $z$ and $x$ on $\Gamma^*$ 
    (as in Figure~\ref{fig:first_subcase2})
    then the angle at $y$ is reflex. We will assume from now on without loss of generality
    that $y$ lies between $z$ and $x$. 

    We will first show that the edges of $\pi(q',p')$ all have a direction in the canonical cone~$\Phi_j$.
    \claiminproof{1}{For every edge $e\in \pi(q',p')$, we have that $\dir(e)\in \Phi_j$.}
    \begin{proofinproof}
    Since in ${\square_\pi}{q'xyp'}$ the angle at $y$ is reflex, ${\square_\pi}{q'xyp'}$ is
     a pseudo-triangle $\mathcal{T}$ with sides $\pi(q',x)$, and $\pi(x,y)\cup \pi(y,p')$,
    and $\pi(q',p')$. Let $e_1$ be the last edge of $\pi(q',x)$. Then we know that 
    $\dir(e_1)\in \Phi_j$ and $\dir(y,p')\in \Phi_j$.  Therefore, we have that $\mathbb{S}^1[\dir(e_1),\dir(y,p')]\subset \Phi_j$. 
    Since $\dir(e_1)\in\Span(q',x)$ and $\dir(y,p')\in\Span(x,p')$,
   Lemma \ref{lem:geo-triangle}(ii) gives us that $\Span(\pi(q',p'))\subset \mathbb{S}^1[\dir(e_1),\dir(y,p')] \subset \Phi_j$, which is what we wanted to prove.
    \end{proofinproof}
    Claim~1 implies the following.
    \claiminproof{2}{$\|\pi(q',x)\|+\|\pi(y,p')\|<(1+\frac{\varepsilon}{2})\|\pi(q',p')\| +\|\pi(x,y)\|$}
    \begin{proofinproof}
    We  extend $e_1$ to the interior of $\mathcal{T}$ 
    until it intersects $\pi(q',p')$ at a point $u$. We also extend $\first(\pi(y,p'))$ 
    to both directions until it intersects $\overline{xu}$ at a point $v$ 
    and $\pi(q',p')$ at a point $w$.
    Refer to Figure~\ref{fig:first_subcase1} for an illustration, where (to reduce the clutter in the figure) point $w$ coincides with $p'$.
        
    Then, by repeated application of the triangle inequality we have:
    \[
    \begin{array}{lll}
     \|\pi(q',x)\|+\|\pi(y,p')\| & \leq & \|\pi(q',u)\|+ \|\pi(u,x)\|+ \|\pi(y,p')\| \\
                & \leq & \|\pi(q',u)\|+ \|\overline{uv}\|+ \|\overline{vx}\| + \|\overline{yw}\|+\|\pi(w,p')\|\\
               & \leq & \|\pi(q',u)\|+\|\pi(w,p')\| + \|\overline{uv}\|+\|\overline{vy}\|+\|\pi(y,x)\| +\|\overline{yw}\| \\
              & = & \|\pi(q',u)\|+\|\pi(w,p')\| + \left( \|\overline{uv}\|+\|\overline{vw}\|\right)+\|\pi(x,y)\| 
    \end{array}
    \]
    In $\triangle_\pi{uvw}$, by Claim 1. we know that $\overline{uv},\overline{vw}$ are almost 
    parallel. More precisely, $\dir(\overline{uv})\in \Phi_j$ and $\dir(\overline{vw})\in \Phi_j$.
    Since our canonical cones have angle at most~$\eps/2$, this means that if we consider the (Euclidean) triangle $\triangle{uvw}$, 
    then the angles at $u$ and $w$ are both less than~$\varepsilon/2$. 
    This trivially gives us that 
    \[
    \|\overline{uv}\|+\|\overline{vw}\|<\left(1+\frac{\varepsilon}{2}\right)\|\overline{uw}\|<\left(1+\frac{\varepsilon}{2}\right)\|\pi(u,w)\|
    \]
    By combining this with our earlier derivation we obtain
    \begin{align*}
    \|\pi(q',x)\|+\|\pi(y,p')\|&< \|\pi(q',u)\|+\|\pi(w,p')\|+\left(1+\frac{\varepsilon}{2}\right)\|\pi(u,w)\|+\|\pi(x,y)\| \\ 
                         &<\left(1+\frac{\varepsilon}{2}\right)\|\pi(q',p')\| +\|\pi(x,y)\|,
    \end{align*}
    where the last inequality uses that $u,w\in \pi(q',p')$.
    \end{proofinproof}
    With Claim~2 in hand, we are ready to conclude the proof of Subcase~II: 
\[
\begin{array}{lll}
 \|\pi(q,p)\| & \leq & \|\pi(q,q')\| +\|\pi(q',x)\|+\|\pi(x,\gamma_i)\| +\|\pi(\gamma_i,p)\| \hspace*{5mm} \hfill \mbox{(Triangle Inequality)}\\
             & \leq & \|\pi(q,q')\| + \|\pi(q',x)\|+ \|\Gamma_i\| +\|\pi(\gamma_i,f_{i,j})\| \hfill \mbox{($x,\gamma_i\in\Gamma_i$ and definition of $f_{i,j}$)} \\
            & \leq & \|\pi(q,q')\| + \|\pi(q',x)\|+ \|\Gamma_i\| + \|\pi(\gamma_i,y)\| + \|\pi(y,f_{i,j})\| \hfill \mbox{(Triangle Inequality)} \\
             & \leq & \|\pi(q,q')\| + \|\pi(q',x)\|+ 2\|\Gamma_i\|  + \|\pi(y,f_{i,j})\|    \hfill \mbox{(since $y,\gamma_i\in \Gamma_i$)} \\
             & \leq & \|\pi(q,q')\| + \|\pi(q',x)\|+ 2\|\Gamma_i\| + \|\pi(y,p')\|   + \|\pi(p',f_{i,j})\|  \hfill \mbox{(Triangle Inequality)} \\
            & < & \|\pi(q,q')\|+2\|\Gamma_i\|+\left(1+\frac{\varepsilon}{2}\right)\|\pi(q',p')\| +\|\pi(x,y)\|   + \|\pi(p',f_{i,j})\|   \hspace*{3mm} \mbox{(by Claim~2)} \\
            & \leq &  3\|\Gamma_i\| + \|\pi(q,q')\| +\left(1+\frac{\varepsilon}{2}\right)\|\pi(q',p')\|   + \|\pi(p',f_{i,j})\| \hfill   \mbox{(since $x,y\in\Gamma_i$)} \\
            & \leq & 3\|\Gamma_i\|+ \left(1+\frac{\varepsilon}{2}\right)\|\pi(q,f_{i,j})\| 
                                          + \|\pi(p',f_{i,j})\|   \hfill \mbox{(since $q',p'\in\pi(q,f_{i,j}$))} \\
            & \leq & \frac{\varepsilon}{2}\|\pi(q,p)\|+\left(1+\frac{\varepsilon}{2}\right)\|\pi(q,f_{i,j})\| \hfill \mbox{(since $\|\Gamma_i\| \leq \frac{\eps}{6}\|\Gamma\| \leq \frac{\eps}{3} \|\pi(q,p)\| $)}
\end{array}
\]
and so we have 
\[
    \|\pi(q,f_{i,j})\| \ \ > \ \ \frac{1-\varepsilon/2}{1+\varepsilon/2} \cdot \|\pi(q,p)\|
                        \ \ > \ \ \left(1-\varepsilon\right) \cdot\|\pi(q,p)\|.
\]
\item \emph{Subcase III: $\pi(q,f_{i,j})$ does not cross $\Gamma_i$ and $\dir(q',x)\notin \Phi_j$.} \\[2mm]
In this case we also have $\dir(q',p')\notin \Phi_j$, because 
$\first(\pi(q',p'))$ lies further away from $\Phi_j$ than $\first(\pi(q',x))$ 
does.\footnote{This is true because as we walk back from $x$ to $q'$ along the reflex chain~$\pi(q',x)$, the direction turns monotonically. By the time
we reach $q'$, the direction is $\dir(q',x)$, which lies outside $\Phi_j$.
When we then turn to $\dir(q',p')$ we turn away from $\Phi_j$ even more.}
Below we show that this implies $\pi(q,p')$ crosses $\Gamma_i$.
But then $\pi(q,f_{i,j})$ crosses $\Gamma_i$, which contradicts the 
conditions of Subcase~III. Hence, Subcase~III cannot occur.

So, assume for a contradiction that $\pi(q,p')$ crosses $\Gamma^*$ at 
a point $z\notin \Gamma_i$, as shown schematically in Figure~\ref{fig:second_subcase}.
\begin{figure}
\begin{center}
\includegraphics{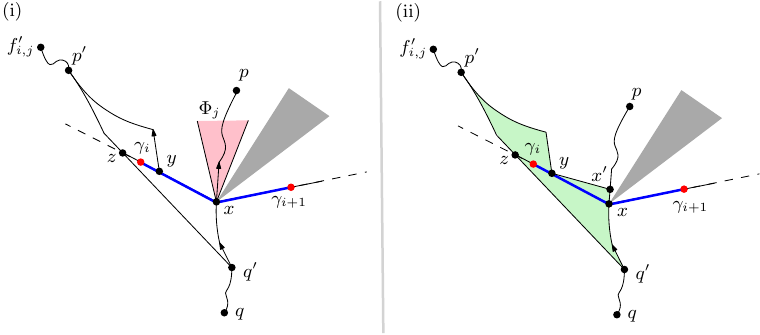}
\end{center}
\caption{ (i) A rough illustration for the case $\dir(q',x)\notin \Phi_j$. Note that in this example $x$ is a reflex vertex of $\P$. (ii) We can choose $x'$ close to $x$ on $\first(\pi(x,p))$ and repeat the proof of Claim~1 using ${\square_\pi}{q'x'yp'}$ (green region).}
\label{fig:second_subcase}
\end{figure}
As in Subcase~II we may assume that $y$ lies between $x$ and $z$ on $\Gamma^*$ and 
thus the angle at $y$ is reflex. Then essentially we can reprove Claim~1 of Subcase~II, 
which will give a contradiction since $\dir(q',p')\notin \Phi_j$. 
There is a technicality which lies in the case that $x$ coincides with a (reflex) vertex of $\P$. 
Then the direction of $e_1$, last edge  of $\pi(q',x)$,
need not be the same as the direction of the first edge of $\pi(x,p)$,
and so we may not have $\dir(e_1)\in  \Phi_j$.
(In Subcase~II, because $\dir(q',x)\in \Phi_j$,
we know that $\dir(e_1)\in  \Phi_j$ even when $v$ coincides with a reflex vertex of~$\P$.)
This can be bypassed with the following trick: choose $x'$ in $\first(\pi(x,p))$ very close to $x$ 
and observe that we can imitate the proof of Claim~1 using ${\square_\pi}{q'x'yp'}$ 
instead of ${\square_\pi}{q'xyp'}$. 
\end{itemize}
\end{proof}

\subsection{Extension to points outside $\rch(P)$}\label{outside-rch}
Lemma~\ref{lem:inside-rch} states that for any query point~$q\in\P$, either there is a point 
$q^*\in\Gamma$ with $\fn(q^*)=\fn(q)$ or $\pi(q,\fn(q))$ crosses~$\Gamma^*$. 
When $q\in\rch(P)$, then in the latter case we know that $\pi(q,\fn(q))$ crosses~$\Gamma$,
because for $q\in\rch(P)$ we have $\pi(q,\fn(q))\subset \rch(P)$. 
When $q\notin \rch(P)$ this need not be the case, however. To obtain a valid $\eps$-coreset
using the technique of the previous subsections,
it thus seems that we should subdivide $\Gamma^*$ (instead of~$\Gamma$) into pieces of length $\frac{\eps}{6} \|\Gamma\|$. This is problematic, however,
since $\Gamma^*$ can be arbitrarily much larger than $\Gamma$, and so the
size of our coreset may explode. Next we show how to handle this.

Recall that the points $p_1$ and $p_2$ define the diameter~$\Gamma$, and assume
without loss of generality that the segment $\overline{p_1 p_2}$ is horizontal. 
Let $\sigma$ be the square of side length $2\|\Gamma\|$ centered
at the midpoint of~$\overline{p_1 p_2}$. Then $\rch(P)$ lies completely inside~$\sigma$,
because for any point $r$ outside~$\sigma$ we have
$\max( \| \overline{p_1 r} \|, \| \overline{p_2 r} \| ) > \|\Gamma\|$.
We will now define a set $B$ consisting of $O(1)$ segments, each of length at most
$2\|\Gamma\|$, such that the following holds for any query point $q\not\in \rch(P)$:
the shortest path $\pi(q,\fn(q))$ intersects $\Gamma$ or it intersects a segment in~$B$.
The set $B$ is defined as follows; see Figure~\ref{fig:better-container}.
\begin{figure}
\begin{center}
\includegraphics{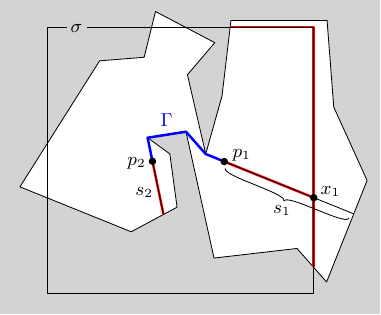}
\end{center}
\caption{The segments in $B$ are shown in red. Segment $s_1$ intersects
$\bd\sigma$, so besides $s_1\cap\sigma$, the segments forming the connected component
of $\bd\sigma \cap \P$ containing~$x_1$ are added to~$B$ as well.
Since $s_2$ lies fully inside $\sigma$,
no segments are added to $B$ for $s_2$, except $s_2$ itself. }
\label{fig:better-container}
\end{figure}

Consider the extension $\Gamma^* := \bfext(\Gamma)$. Note that $\Gamma^*\setminus \Gamma$
consists of two segments, a segment~$s_1$ incident to~$p_1$ and a segment $s_2$ incident to~$p_2$.
For $i\in \{1,2\}$, we add the following segments to~$B$.
We start by adding the segment $s_i\cap \sigma$ to~$B$.
Then, if $s_i$ intersects $\bd \sigma$, we add some more segments to $B$, as follows.
Let $x_i := s_i \cap \bd\sigma$, that is, $x_i$ is the point where $s_i$ exits~$\sigma$.
We add to $B$ the segments comprising the connected component of $\bd\sigma \cap \P$
that contains~$x_i$.
\begin{lemma} \label{lem:outside-rch}
Let $q\not\in\rch(P)$. Then $\pi(q,\fn(q))$ intersects $\Gamma$ or it intersects one of
the segments in~$B$.
\end{lemma}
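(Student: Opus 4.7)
The plan is to invoke Lemma~\ref{lem:inside-rch}, which tells us that $\pi(q,\fn(q))$ crosses $\Gamma^*=\Gamma\cup s_1\cup s_2$, and then case-split on the location of the crossing. If some crossing lies on $\Gamma$ we are immediately done, and if some crossing lies on $s_i\cap\sigma$ for some $i\in\{1,2\}$ then we are done since $s_i\cap\sigma\in B$. The only remaining case is when every crossing of $\pi(q,\fn(q))$ with $\Gamma^*$ lies on $s_i\setminus\sigma$ for some $i$; WLOG take $i=1$ and let $y$ be such a crossing. Since $y\notin\sigma$, the segment $s_1$ must extend beyond $\sigma$, so $x_1=s_1\cap\bd\sigma$ is defined and $y$ lies on the subsegment $\overline{x_1p_1'}$ of $s_1$.

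In this last case, the aim is to show that $\pi(q,\fn(q))$ intersects $C$, the component of $\bd\sigma\cap\P$ containing $x_1$, which is part of $B$. For this, I would use that $\fn(q)\in P\subseteq\rch(P)\subseteq\sigma$ while $y\notin\sigma$, so the subpath of $\pi(q,\fn(q))$ from $y$ to $\fn(q)$ must re-enter $\sigma$ at some point $z\in\bd\sigma\cap\P$. Concatenating $\overline{x_1y}\subseteq s_1\setminus\sigma$ with the subpath from $y$ to $z$ gives a continuous curve $\eta$ from $x_1$ to $z$ lying entirely in the closed region $\P\setminus\myint(\sigma)$, and the crux is to deduce that $z\in C$.

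The main obstacle I expect is exactly this last topological step. The general claim that the endpoints of any path in $\P\setminus\myint(\sigma)$ lie in the same component of $\bd\sigma\cap\P$ is false without additional hypotheses (consider a U-shaped polygon wrapping around $\sigma$, where a path through the bottom of the U can connect boundary arcs belonging to different components), so the argument has to genuinely exploit that $\overline{x_1p_1'}$ is a straight chord from $\bd\sigma$ to $\bd\P$. My planned approach is Jordan-curve-theoretic: close $\eta$ by a chord of $\sigma\cap\P$ from $z$ back to $x_1$ to form a Jordan curve in $\P$, use simple connectivity of $\P$ to conclude it bounds a topological disk inside $\P$, and derive a contradiction if $z$ lay in a different component $C'$ of $\bd\sigma\cap\P$ by showing that the arc of $\bd\sigma\setminus\P$ separating $C$ from $C'$ would then have to sit inside this disk (and hence inside $\P$), which is impossible. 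Handling potential degeneracies when $\eta$ touches $\bd\sigma$ before reaching $z$ may also require choosing $z$ carefully, e.g.\ as the \emph{last} entry of $\pi(q,\fn(q))$ into $\sigma$ along the way to $\fn(q)$.
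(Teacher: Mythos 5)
Your overall structure (invoke Lemma~\ref{lem:inside-rch}, then split on where $\pi(q,\fn(q))$ meets $\Gamma^*$) matches the paper, but the step you yourself flag as the crux is exactly where the argument is missing, and the route you propose for it does not work as sketched. The paper does not try to identify the component of $\bd\sigma\cap\P$ containing a re-entry point of the path into $\sigma$; it proves the weaker (and sufficient) statement that the path must cross the component $C$ containing $x_1$ \emph{somewhere}, via a separation argument: if $C\neq\bd\sigma$, then $C$ is an arc whose endpoints lie on $\bd\P$, so it splits the simply connected polygon $\P$ into two parts; the segment $s_1$ meets $\bd\sigma$ only at $x_1$ and crosses $C$ there, so $s_1\setminus\sigma$ (in particular your crossing point $y$) lies in one part, while $p_1$ --- and hence the connected set $\rch(P)$, which is disjoint from $\bd\sigma$ --- lies in the other. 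Since $\fn(q)\in\rch(P)$, the subpath of $\pi(q,\fn(q))$ from $y$ to $\fn(q)$ must intersect $C$, i.e.\ one of the segments placed in $B$; if $C=\bd\sigma$ the conclusion is immediate because $y$ is outside $\sigma$ and $\fn(q)$ is inside.

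Concretely, your plan has two gaps. First, the closing ``chord of $\sigma\cap\P$ from $z$ back to $x_1$'' need not exist: a straight chord of the square between two points of $\bd\sigma\cap\P$ can leave the polygon, and whether $z$ and $x_1$ can be joined inside $\sigma\cap\P$ at all is essentially equivalent to the statement you are trying to establish, so the Jordan curve cannot simply be formed this way. Second, the claim you are aiming at --- that the first re-entry after $y$, or the last entry before reaching $\fn(q)$, lies on $C$ --- is stronger than needed and false in general: $\P\cap\sigma$ may consist of several pockets connected to each other only through corridors of $\P$ outside $\sigma$, and the geodesic may enter and leave $\sigma$ through components of $\bd\sigma\cap\P$ different from $C$ both before and after its (forced) crossing of $C$. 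So no choice of ``the'' re-entry point can be pinned to $C$; the correct target of the topological argument is the separation statement above, which also makes the U-shaped-polygon obstruction you worried about irrelevant, since it only uses that $s_1$ crosses $C$ once at $x_1$ and that $\rch(P)$ is connected and avoids $\bd\sigma$.
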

\begin{proof}
By Lemma~\ref{lem:inside-rch} we know that $\pi(q,\fn(q))$ intersects $\Gamma^*$. If it
does so at a point of $\Gamma$ then we are done. If it does so at a point of
$s_1$ or $s_2$ inside $\sigma$ then we are done as well, since $s_1\cap \sigma$ 
and $s_2\cap \sigma$ were added to~$B$. Now suppose $\pi(q,\fn(q))$ intersects $\Gamma^*$
outside $\sigma$, say at a point $z\in s_i\setminus \sigma$. Note that $z$ is separated
from $\rch(P)$---and, hence, from $P$---by the connected component of $\bd\sigma \cap \P$
that contains~$x_i$. Hence, $\pi(q,\fn(q))$ must intersect one of the segments
of this connected component. Since we put these segments into~$B$, this finishes the proof.
\end{proof}
With Lemma~\ref{lem:outside-rch} available, it is straightforward to extend the coreset so that
it correctly handles query points $q\notin\rch(P)$. To this end we simply partition each of 
the segments of $B$ into $O(1/\eps)$ pieces of length at most $\frac{\eps}{6} \|\Gamma\|$,
and we apply the ``cone-technique'' to each such segment, that is, we treat them in the
same way as we treated the segments $\first(\Gamma_i)$ and $\last(\Gamma_i)$ in the
previous subsection. This is correct because in the proof of Lemma~\ref{lem:third-case}
we never used that $\Gamma_i \subset \Gamma$.  All we needed was that the piece $\Gamma_i$ 
is a shortest path, which trivially holds for the pieces we created for $B$, since each such piece is a segment.
\section{An efficient algorithm to construct the coreset}
In this section we explain how to construct the coreset~$\coreset$ defined in the 
previous section. We will show that we can do that quite efficiently, namely in time 
$O\left(\frac{1}{\eps} \left( n\log(1/\eps) + (n+m)\log(n+m)\right) \right)$. 

\subparagraph{Preprocessing.} 
Our approach crucially uses the diameter of $P$, so we start by computing $\Gamma$. 
This can be done in $O((m+n)\log(m+n)$ by first computing $\rch(P)$ in $O(m+n\log(m+n))$ 
time~\cite{toussaint-rch} and then computing its diameter in $O((m+n)\log(m+n))$ time~\cite{SURI1989220}.  
We then preprocess the subdivision of $\P$ induced by $\Gamma$ for point 
location~\cite{DBLP:reference/cg/Snoeyink04}, so that we can compute for each point 
in $P$ whether it lies in $\Hone$ or $\Htwo$, in $O(m+n\log m )$ time in total.

We now show how to compute the set $\coreset_i$ of points we put into our coreset~$\coreset$
for a given piece~$\Gamma_i$. We start by preprocessing~$\P$ for single-source shortest-path queries 
with $\gamma_i$ as the fixed source. This takes $O(m)$ time, after which we can compute the
distance from each point in~$P$ to $\gamma_i$ in $O(\log m)$ time~\cite{DBLP:journals/jcss/GuibasH89}.
Thus computing all distances takes $O(n\log m)$ time in total.
Recall that $\coreset_i$ contains the point furthest away from $\Gamma_{j-1}$ from each of the following sets.
\begin{itemize}
\item The sets $P\cap \Hone$ and $P\cap \Htwo$.
\item The sets $P\cap E_1$ and $P\cap E_2$,
      where $E_1$ and $E_2$ are the two half-polygons into which $\P$ is split by the intermediate edge~$e$
      of $\Gamma_i$, if such an edge exists.
\item The sets $X_i\cap \Hone$ and $X_i\cap \Htwo$, where $X_i\subset P$ contains the 
      points that lie in a pocket of~$\Gamma_i$.
\item The sets $F_{i,j}\cap \Hone$ and $L_{i,j}\cap \Hone$, and the set $F_{i,j}\cap \Htwo$ and $L_{i,j}\cap \Htwo$,
       where $F_{i,j}$ contains the points reachable from $\first(\Gamma_i)$ in the cone~$\Phi_j$, and
       $L_{i,j}$ contains the points reachable from $\last(\Gamma_i)$ in the cone~$\Phi_j$.
\end{itemize}
The sets $P\cap E_1$, $P\cap E_2$, $X_i\cap \Hone$, $X_i\cap \Htwo$ can each be determined 
by performing a point-location query for each point $p\in P$  in a suitably defined subdivision 
of complexity~$m$. This can be done in $O((m+n)\log{m})$ time in total~\cite{DBLP:reference/cg/Snoeyink04}. 
Since we precomputed all distances to $\gamma_i$, this also gives us the furthest neighbor of $\gamma_i$ 
from each of these sets.   So far we spent $O((n+m)\log m)$ time in total to compute~$\coreset_i$.
\medskip

Efficiently determining the furthest neighbors in the sets  $F_{i,j}\cap \Hone$, $F_{i,j}\cap \Htwo$, $L_{i,j}\cap \Hone$, and  $L_{i,j}\cap \Htwo$
is slightly more complicated. Next we explain how to handle $F_{i,j}\cap \Hone$;
the other sets can be computed analogously. 

Consider a point $p\in P$. Recall that we preprocessed~$\P$ for single-source shortest-path queries 
with $\gamma_i$ as the fixed source. Using this data structure, we can not only
find $\|\pi(\gamma_i,p)\|$ in $O(\log m)$ time, but we can also report the first edge
of this path in the same amount of time.\footnote{Indeed, shortest-path queries are answered
using a shortest-path tree with the source node as root. It then suffices to know, for each
node~$v$ in the shortest-path tree, which subtree of the root contains~$v$.}
Let $f_i$ be the endpoint of $\first(\Gamma_i)$
other than $\gamma_i$. We also preprocess~$\P$ for single-source shortest-path queries 
with $f_{i}$ as the fixed source, so that we can find $\first(\pi(f_i,p))$ in $O(\log n)$ time.
After computing these edges, we know all directions $\phi$ such that there is a point $x\in\first(\Gamma_i)$
with $\dir(x,p)=\phi$, since these are simply the directions in between $\dir(\gamma_i,p)$
and $\dir(f_{i},p)$; see Figure~\ref{fig:preprocessing}.  
\begin{figure}
\begin{center}
\includegraphics{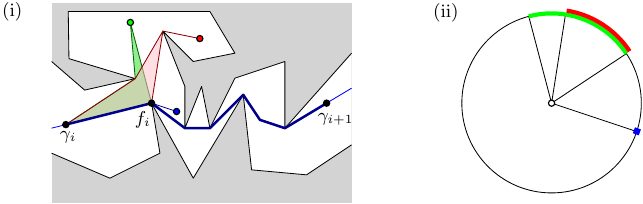}
\end{center}
\caption{(i) The shortest paths from three points, shown in green, red, and blue, to $\gamma_i$
and to~$f_i$. (ii) The directions from which these points can be reached are the directions in between
$\dir(\gamma_i,p)$ and $\dir(f_{i},p)$. Note that for the blue point these directions
are the same.}
\label{fig:preprocessing}
\end{figure}
This is true since, as $x$ moves 
along $\first(\Gamma_i)$ from $\gamma_i$ to $f_i$, the direction of the first edge changes continuously
from $\dir(,p)$ to $\dir(f_{i},p)$.
Hence, we can list in $O(\frac{1}{\eps} + \log m)$ time for a given point $p\in P\cap \Hone$ all
cone directions for which $p$ is relevant. Thus we can compute all sets
$F_{i,j}\cap \Hone$ in $O(n/\eps + n\log m)$ time in total.

We can speed this up using standard range-searching techniques. To this end we build a 
balanced binary search tree~$\T_i$ over the canonical cone directions, seen as sub-intervals 
of~$[0,360^{\circ}]$. Because of the way the cones are generated, this can be done in
linear time. Each point $p\in P$ induces a set $I(p)$ of 
one or two sub-intervals of $[0,360^{\circ}]$, corresponding to the directions in which
$p$ can be reached from $\first(\Gamma_i)$.
We assign the distance of $p$ to $\gamma_i$ as a weight to $I(p)$. 
Then we want to know, for each cone~$\Phi_j$, the maximum weight of any $I(p)$
that such that $\Phi_j\cap I(p)\neq\emptyset$. To compute this we perform a range query
with each~$I(p)$, selecting a set $N(p)$ of $O(\log(1/\eps))$ nodes in $\T_i$ such that 
$\Phi_j\cap I(p)\neq\emptyset$ if and only if $\Phi_j$ corresponds to a leaf in
a subtree rooted at a node in $N(p)$. Then we determine for each node~$v$ 
in $\tree_i$ the maximum weight of any $I(p)$ such that $v\in N(p)$,
taking $O(n\log(1/\eps))$ time in total. Finally, we traverse $\tree$ in a top-down manner 
to compute for each of its leaves the largest weight associated with any of the nodes on its search path. Thus finding the furthest neighbor from $\gamma_i$
in $F_{i,j}\cap \Hone$, over all $j$, takes $O(n\log(1/\eps))$ time in total.

To sum up, for each $\Gamma_i$ we can compute $\coreset_i$ in time $O(n\log(1/\eps) + (m+n)\log m)$. 
Since the above computations need to be done for each piece $\Gamma_i$ and there are $O(1/\eps)$ pieces, 
the coreset $\coreset$ can be constructed in $O\left(\frac{1}{\eps} \left( n\log(1/\eps) + (n+m)\log(n+m)\right) \right)$ time in total. 
(Actually, the bound is $O\left( (n+m)\log(n+m) + \frac{1}{\eps} \left( n\log(1/\eps) + (n+m)\log m\right) \right)$, but we prefer the slightly shorter expression given above.)
This concludes the proof of Theorem~\ref{thm:coreset}.

\section{Concluding remarks}
We proved that any set $P$ of $n$ points in a simple polygon~$\P$ of complexity~$m$
admits an $\eps$-coreset of size $O(1/\eps^2)$ for furthest-neighbor queries. 
Thus the size of the coreset is independent of $n$ and $m$. This immediately
gives a data structure for approximate furthest-neighbor queries whose size is independent
of~$n$.  The size of the data structure will be linear in $m$,
but this is unavoidable: it is easy to see that for $n=2$ we already
have an $\Omega(m)$ lower bound on the storage of any data structure for 
approximate furthest-neighbor queries.

Another application, which we did not mention so far, is the $1$-center
problem with $z$~outliers in a simple polygon, for which we can now obtain an
$\eps$-coreset of size $O(z/\eps^2)$. To this end we use the 
technique of Agarwal, Har-Peled, and Yu~\cite{DBLP:conf/soda/AgarwalHY06}:
compute an $\eps$-coreset~$\coreset$ for furthest-neighbor queries on~$P$, 
remove $\coreset$ from $P$, compute an $\eps$-coreset for furthest-neighbor queries 
on the remaining points and add these to the first coreset, and so on. If this is repeated $z+1$ times, the result is an $\eps$-coreset for the 1-center problem with $z$ outliers.

There are several directions for further research. First of all, we do not expect our
bound to be tight: we believe that an $\eps$-coreset of size $O(1/\eps)$, or perhaps even $O(\sqrt{1/\eps})$, should exist. One approach to reduce the size of the coreset is
to prove this may be to show that 
in our method we can re-use many points, that is, that the $O(1/\eps)$ subsets $\coreset_i$
that we define for each of the $O(1/\eps)$ pieces $\Gamma_i$ will have (or can be made to have)
many points in common.  
Another challenging open problem is to extend our
result to polygons with holes. This seems to require a different approach, as 
our reduction to points on the diameter fails in this setting.

\bibliography{refs}

\begin{thebibliography}{10}

\bibitem{DBLP:journals/jacm/AgarwalHV04}
Pankaj~K. Agarwal, Sariel Har{-}Peled, and Kasturi~R. Varadarajan.
\newblock Approximating extent measures of points.
\newblock {\em J. {ACM}}, 51(4):606--635, 2004.
\newblock \href {https://doi.org/10.1145/1008731.1008736} {\path{doi:10.1145/1008731.1008736}}.

\bibitem{DBLP:conf/soda/AgarwalHY06}
Pankaj~K. Agarwal, Sariel Har{-}Peled, and Hai Yu.
\newblock Robust shape fitting via peeling and grating coresets.
\newblock In {\em Proc.~1th Annual {ACM-SIAM} Symposium on Discrete Algorithms ({SODA})}, pages 182--191, 2006.
\newblock URL: \url{http://dl.acm.org/citation.cfm?id=1109557.1109579}.

\bibitem{alkema_et_al}
Henk Alkema, Mark de~Berg, Morteza Monemizadeh, and Leonidas Theocharous.
\newblock {TSP in a Simple Polygon}.
\newblock In Shiri Chechik, Gonzalo Navarro, Eva Rotenberg, and Grzegorz Herman, editors, {\em 30th Annual European Symposium on Algorithms (ESA 2022)}, volume 244 of {\em Leibniz International Proceedings in Informatics (LIPIcs)}, pages 5:1--5:14, Dagstuhl, Germany, 2022. Schloss Dagstuhl -- Leibniz-Zentrum f{\"u}r Informatik.
\newblock \href {https://doi.org/10.4230/LIPIcs.ESA.2022.5} {\path{doi:10.4230/LIPIcs.ESA.2022.5}}.

\bibitem{DBLP:journals/dcg/AronovFW93}
Boris Aronov, Steven Fortune, and Gordon~T. Wilfong.
\newblock The furthest-site geodesic {V}oronoi diagram.
\newblock {\em Discret. Comput. Geom.}, 9:217--255, 1993.
\newblock \href {https://doi.org/10.1007/BF02189321} {\path{doi:10.1007/BF02189321}}.

\bibitem{DBLP:journals/ipl/AurenhammerDK06}
Franz Aurenhammer, Robert L.~Scot Drysdale, and Hannes Krasser.
\newblock Farthest line segment {V}oronoi diagrams.
\newblock {\em Inf. Process. Lett.}, 100(6):220--225, 2006.
\newblock \href {https://doi.org/10.1016/j.ipl.2006.07.008} {\path{doi:10.1016/j.ipl.2006.07.008}}.

\bibitem{DBLP:conf/isaac/AurenhammerPS21}
Franz Aurenhammer, Evanthia Papadopoulou, and Martin Suderland.
\newblock Piecewise-linear farthest-site {V}oronoi diagrams.
\newblock In {\em 32nd International Symposium on Algorithms and Computation (ISAAC)}, volume 212 of {\em LIPIcs}, pages 30:1--30:11, 2021.
\newblock \href {https://doi.org/10.4230/LIPIcs.ISAAC.2021.30} {\path{doi:10.4230/LIPIcs.ISAAC.2021.30}}.

\bibitem{DBLP:conf/cccg/Bespamyatnikh96a}
Sergei Bespamyatnikh.
\newblock Dynamic algorithms for approximate neighbor searching.
\newblock In {\em Proceedings of the 8th Canadian Conference on Computational Geometry}, pages 252--257, 1996.
\newblock URL: \url{http://www.cccg.ca/proceedings/1996/cccg1996\_0042.pdf}.

\bibitem{DBLP:journals/comgeo/CheongEGGHLLN11}
Otfried Cheong, Hazel Everett, Marc Glisse, Joachim Gudmundsson, Samuel Hornus, Sylvain Lazard, Mira Lee, and Hyeon{-}Suk Na.
\newblock Farthest-polygon {V}oronoi diagrams.
\newblock {\em Comput. Geom.}, 44(4):234--247, 2011.
\newblock \href {https://doi.org/10.1016/j.comgeo.2010.11.004} {\path{doi:10.1016/j.comgeo.2010.11.004}}.

\bibitem{DBLP:books/lib/BergCKO08}
Mark de~Berg, Otfried Cheong, Marc~J. van Kreveld, and Mark~H. Overmars.
\newblock {\em Computational Geometry: Algorithms and Applications (3rd Edition)}.
\newblock Springer, 2008.
\newblock \href {https://doi.org/10.1007/978-3-540-77974-2} {\path{doi:10.1007/978-3-540-77974-2}}.

\bibitem{DBLP:conf/soda/GoelIV01}
Ashish Goel, Piotr Indyk, and Kasturi~R. Varadarajan.
\newblock Reductions among high dimensional proximity problems.
\newblock In {\em Proceedings of the Twelfth Annual Symposium on Discrete Algorithms}, pages 769--778, 2001.
\newblock URL: \url{http://dl.acm.org/citation.cfm?id=365411.365776}.

\bibitem{DBLP:journals/jcss/GuibasH89}
Leonidas~J. Guibas and John Hershberger.
\newblock Optimal shortest path queries in a simple polygon.
\newblock {\em J. Comput. Syst. Sci.}, 39(2):126--152, 1989.
\newblock \href {https://doi.org/10.1016/0022-0000(89)90041-X} {\path{doi:10.1016/0022-0000(89)90041-X}}.

\bibitem{DBLP:conf/soda/Indyk03}
Piotr Indyk.
\newblock Better algorithms for high-dimensional proximity problems via asymmetric embeddings.
\newblock In {\em Proc.~14th Annual {ACM-SIAM} Symposium on Discrete Algorithms ({SODA})}, pages 539--545, 2003.
\newblock URL: \url{http://dl.acm.org/citation.cfm?id=644108.644200}.

\bibitem{DBLP:journals/ijcga/MehlhornMR01}
Kurt Mehlhorn, Stefan Meiser, and Ronald Rasch.
\newblock Furthest site abstract {V}oronoi diagrams.
\newblock {\em Int. J. Comput. Geom. Appl.}, 11(6):583--616, 2001.
\newblock \href {https://doi.org/10.1142/S0218195901000663} {\path{doi:10.1142/S0218195901000663}}.

\bibitem{DBLP:books/el/00/Mitchell00}
Joseph S.~B. Mitchell.
\newblock Geometric shortest paths and network optimization.
\newblock In J{\"{o}}rg{-}R{\"{u}}diger Sack and Jorge Urrutia, editors, {\em Handbook of Computational Geometry}, pages 633--701. North Holland / Elsevier, 2000.
\newblock \href {https://doi.org/10.1016/b978-044482537-7/50016-4} {\path{doi:10.1016/b978-044482537-7/50016-4}}.

\bibitem{DBLP:journals/algorithmica/OhBA20}
Eunjin Oh, Luis Barba, and Hee{-}Kap Ahn.
\newblock The geodesic farthest-point {V}oronoi diagram in a simple polygon.
\newblock {\em Algorithmica}, 82(5):1434--1473, 2020.
\newblock \href {https://doi.org/10.1007/s00453-019-00651-z} {\path{doi:10.1007/s00453-019-00651-z}}.

\bibitem{DBLP:journals/is/PaghSSS17}
Rasmus Pagh, Francesco Silvestri, Johan Sivertsen, and Matthew Skala.
\newblock Approximate furthest neighbor with application to annulus query.
\newblock {\em Inf. Syst.}, 64:152--162, 2017.
\newblock \href {https://doi.org/10.1016/j.is.2016.07.006} {\path{doi:10.1016/j.is.2016.07.006}}.

\bibitem{DBLP:journals/ijcga/PocchiolaV96}
Michel Pocchiola and Gert Vegter.
\newblock The visibility complex.
\newblock {\em Int. J. Comput. Geom. Appl.}, 6(3):279--308, 1996.
\newblock \href {https://doi.org/10.1142/S0218195996000204} {\path{doi:10.1142/S0218195996000204}}.

\bibitem{Pollack1989}
R.~Pollack, M.~Sharir, and G.~Rote.
\newblock Computing the geodesic center of a simple polygon.
\newblock {\em Discrete {\&} Computational Geometry}, 4(6):611--626, Dec 1989.
\newblock \href {https://doi.org/10.1007/BF02187751} {\path{doi:10.1007/BF02187751}}.

\bibitem{DBLP:conf/compgeom/Seidel87}
Raimund Seidel.
\newblock On the number of faces in higher-dimensional {V}oronoi diagrams.
\newblock In {\em Proceedings of the Third Annual Symposium on Computational Geometry}, pages 181--185, 1987.
\newblock \href {https://doi.org/10.1145/41958.41977} {\path{doi:10.1145/41958.41977}}.

\bibitem{DBLP:reference/cg/Snoeyink04}
Jack Snoeyink.
\newblock Point location.
\newblock In Jacob~E. Goodman and Joseph O'Rourke, editors, {\em Handbook of Discrete and Computational Geometry (3rd edition)}, pages 767--785. Chapman and Hall/CRC, 2017.
\newblock \href {https://doi.org/10.1201/9781315119601} {\path{doi:10.1201/9781315119601}}.

\bibitem{SURI1989220}
Subhash Suri.
\newblock Computing geodesic furthest neighbors in simple polygons.
\newblock {\em Journal of Computer and System Sciences}, 39(2):220--235, 1989.
\newblock URL: \url{https://www.sciencedirect.com/science/article/pii/0022000089900457}, \href {https://doi.org/10.1016/0022-0000(89)90045-7} {\path{doi:10.1016/0022-0000(89)90045-7}}.

\bibitem{toussaint-rch}
Godfried Toussaint.
\newblock An optimal algorithm for computing the relative convex hull of a set of points in a polygon.
\newblock In {\em Proceedings of EURASIP, Signal Processing III: Theories and Applications, Part 2}, pages 853--856, 1986.

\bibitem{Wang2023}
Haitao Wang.
\newblock An optimal deterministic algorithm for geodesic farthest-point voronoi diagrams in simple polygons.
\newblock {\em Discrete {\&} Computational Geometry}, 70(2):426--454, Sep 2023.
\newblock \href {https://doi.org/10.1007/s00454-022-00424-6} {\path{doi:10.1007/s00454-022-00424-6}}.

\end{thebibliography}

\newpage

\appendix
\section{Appendix}
\subsection{Proof of Lemma~\ref{lem:geo-triangle}}

\geotriangle*

\begin{proof}
\begin{enumerate}[(i)]
\item Without loss of generality, assume that $\pi(p,q)$ and $ \pi(q,r)$ contain a common direction $d_1\in \mathbb{S}^1$ or two antipodal directions $d_2, d^*_2\in \mathbb{S}^1$. In either case, we can assume that the directions are horizontal. Thus we can take a horizontal tangent $\ell_1$  at $p_1\in \pi(p,q)$ and a horizontal tangent $\ell_2$ at $p_2\in \pi(q,r)$, such that $\ell_1,\ell_2$ are either both below or above $\pi(p,q),\pi(q,r)$ respectively. Assume that they are both below and that $\ell_1$ is lower than $\ell_2$, as in Figure~\ref{fig:s1-directions}(i). Then $\ell_1$ must exit $\triangle_\pi{pqr}$ via $\pi(p,r)$ at two different points $u,v$. Since $\pi(p,r)$ is a shortest path, this means that $\overline{uv}= \pi(u,v)$. But then $\pi(p,r)$ must go through $p_1$ which contradicts the fact that $\pi(p,q)$ and $ \pi(p,r)$ are internally disjoint (that is, they only intersect at a shared endpoint).
\item Let $\Span(\pi(p,q))= \mathbb{S}^1[a_1,a_2]$, let  $\Span(\pi(q,r))=\mathbb{S}^1[b_1,b_2]$, and let $\Span(\pi(r,p))=\mathbb{S}^1[c_1,c_2]$. Property i) implies that $\mathbb{S}^1[b_1,b_2] \cap \left( \mathbb{S}^1[a_1,a_2] \cup \mathbb{S}^1[a^*_1,a^*_2] \right) =\emptyset$, and so $\mathbb{S}^1[b_1,b_2]\subset \mathbb{S}^1[a_1,a^*_2]$ or $\mathbb{S}^1[b_1,b_2]\subset\mathbb{S}^1[a^*_1,a_2]$. Without loss of generality, assume that $\mathbb{S}^1[b_1,b_2]\subset\mathbb{S}^1[a^*_1,a_2]$, as in Figure~\ref{fig:s1-directions}(ii) .  Now consider $\triangle pqr$ (refer to Figure~\ref{fig:s1-directions}(iii)) and let $\ell=\dir(\overline{pq})$ let $j=\dir(\overline{qr})$ and let $k=\dir(\overline{rp})$. Observe that $\ell\in \mathbb{S}^1[a_1,a_2]$, $j\in \mathbb{S}^1[b_1,b_2]$ and $k\in \mathbb{S}^1[c_1,c_2]$. Moreover, the directions $\ell,j,k$ must span the plane, that is, they cannot lie in a semi-circle of $\mathbb{S}^1$: each of the arcs $\mathbb{S}^1[\ell. j]$, $\mathbb{S}^1[j,k]$, $\mathbb{S}^1[\ell,k]$ corresponds to an exterior angle of $\triangle pqr$ and the exterior angles add up to $2\pi$. This would not possible if $\ell,j,k$ lied in a semi-circle of $\mathbb{S}^1$. Therefore $\mathbb{S}^1[c_1,c_2] \subset \mathbb{S}^1[a^*_2,b^*_1]$, which implies that $\Span(\pi(p,r))=\mathbb{S}^1[c^*_1,c^*_2]\subset \mathbb{S}^1[a_2,b_1]\subset \mathbb{S}^1[\phi_1,\phi_2]$.
\end{enumerate}
\end{proof}
\begin{figure}[h]
\begin{center}
\includegraphics{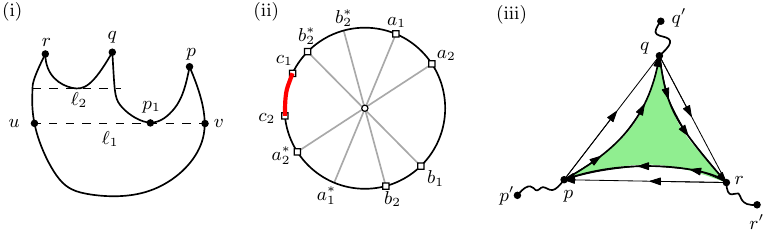}
\end{center}
\caption{Illustrations for the proof of Lemma~\ref{lem:geo-triangle}.}
\label{fig:s1-directions}
\end{figure}

\end{document}